\newcommand{\epsi}{\varepsilon}
\newcommand{\E}{{\mathrm{e}}}
\newcommand{\I}{\mathrm{i}}
 \newcommand{\R}{ \mathbb{R} }
  \newcommand{\Sph}{ \mathbb{S} }
\newcommand{\N}{ \mathbb{N} }
\newcommand{\D}{\mathrm{d}}
\begin{document}

\allowdisplaybreaks

\theoremstyle{plain}
\newtheorem{thm}{Theorem}
\newtheorem{lem}[thm]{Lemma}
\newtheorem{cor}[thm]{Corollary}
\newtheorem{prop}[thm]{Proposition}

\theoremstyle{definition}
\newtheorem{defi}[thm]{Definition}
\newtheorem{rmk}[thm]{Remark}
\newtheorem{assumption}[thm]{Assumption}

\title{The BCS Critical Temperature at High Density }
\author{Joscha Henheik\footnote{\href{mailto:joscha.henheik@ist.ac.at}{joscha.henheik@ist.ac.at} } \\ IST Austria, Am Campus 1, 3400 Klosterneuburg, Austria}
\maketitle
\begin{abstract}
We investigate the BCS critical temperature $T_c$ in the high--density limit and derive an asymptotic formula, which strongly depends on the behavior of the interaction potential $V$ on the Fermi--surface.  Our results include a rigorous confirmation for the behavior of $T_c$ at high densities proposed by Langmann, Triola, and Balatsky (Phys.~Rev.~Lett.~122, 2019) and identify precise conditions under which superconducting domes arise in BCS theory.
\\~ \\
{
	\bfseries
	Keywords:
}
BCS theory, Critical Temperature, Superconducting Domes
\\ 
{
	\bfseries
	Mathematics subject classification: 
}
81Q10, 46N50, 82D55
\end{abstract}

\section{Introduction}	
The Bardeen--Cooper--Schrieffer (BCS) gap equation \cite{bcs}
\begin{equation}
\label{gapeq} \Delta(p) = -\frac{1}{(2\pi)^{3/2}} \int_{\R^3} \hat{V}(p-q) \frac{\Delta(q)}{E_{\Delta, \mu}(q)} \tanh\left(\frac{E_{\Delta, \mu}(q)}{2T}\right) \D q\,,
\end{equation} 
with dispersion relation $E_{\Delta, \mu}(p) = \sqrt{(p^2-\mu)^2+ \vert \Delta(p) \vert^2}$, has played an important role in physics since its introduction. The function $\Delta$ is interpreted as the order parameter describing paired fermions (Cooper pairs) interacting via the local pair potential $2V$, which we assume to be integrable, i.e.~$V \in L^1(\R^3)$. In this case, $\hat{V}(p) = (2\pi)^{-3/2} \int_{\R^3} V(x) \E^{- \I p\cdot x} \D x$ denotes its Fourier transform. The positive parameters $T$ and $\mu$ are the temperature and the chemical potential, respectively, where the latter controls the density of fermions. Whenever the temperature $T$ is below a certain critical temperature $T_c$ (see Definition~\ref{def:1}), the gap equation~\eqref{gapeq} admits non-trivial solutions, above it does not. Physically, this corresponds to the the system being in a superconducting state ($T<T_c$) or a normal state ($T\ge T_c$). 

BCS theory has previously been studied in the weak--coupling limit \cite{hs081,fhns} and low--density limit \cite{hs08,lauritsen}. 
In the weak--coupling limit one considers a potential $\lambda V$ for a fixed potential $V$ for small coupling constant $\lambda \to 0$.  
In this limit, it was shown by Hainzl and Seiringer \cite{hs081} that the critical temperature satisfies  $T_c \sim A \exp(-B/\lambda)$ for explicit constants $A,B >0$. 
In the low--density limit, $\mu \to 0$,  it is shown, again by Hainzl and Seiringer \cite{hs08}, that $T_c \sim \mu A \exp(-B/\sqrt{\mu})$ for some (different) explicit constants $A,B >0$ (see \Cref{eq:lowdensityTc}).

In this paper we are interested in the critical temperature for the existence of non-trivial solutions of the BCS gap equation \eqref{gapeq} in the high--density limit, i.e. $\mu \to \infty$. 
Studying the high--density limit of the critical temperature is especially relevant for explaining superconducting domes \cite{srtio, cuprates, pnictides, heavyfermion,bandins, graphene}, i.e.~a non-monotonic $T_c(\mu)$ exhibiting a maximum value at finite $\mu$ and going to zero for large $\mu$. In a recent paper \cite{langmann}, the authors claim the \textit{ubiquity of superconducting domes in BCS theory}, but only for pure s--wave superconductivity (i.e.~angular momentum $\ell = 0$, see Remark~\ref{rmk:langmann}). Their result disproves the conventional wisdom, that the presence of a superconducting dome necessarily indicates some kind of exotic superconductivity, e.g.~resulting from competing orders. BCS theory containing a non--monotonic behavior of $T_c(\mu)$ is in particular relevant for understanding superconducting domes in doped band insulators \cite{bandins} and magic--angle graphene \cite{graphene}, where no competing orders occur, and thus a more conventional explanation is desirable. 

There is a simple physical picture arising from an interplay of length scales, that explains the ubiquitous appearance of superconducting domes (see \cite{langmann}). If the effective range $\xi$ of the interaction is much smaller than the mean interparticle distance $\mu^{-1/2}$, i.e.~$\xi \ll \mu^{-1/2}$, the critical temperature $T_c$ increases by increasing $\mu$ as predicted by standard BCS theory \cite{bcs} and rigorously justified in \cite{hs08}. At high densities, i.e.~if $\xi \gg \mu^{-1/2}$, the pairing of electrons near the Fermi surface (with approximately opposite momenta), which is responsible for the superconducting behavior, becomes weaker with increasing $\mu$ due to the decay of the interaction in Fourier space, suppressing $T_c$ towards zero. Therefore, at some intermediate density, where $\xi \sim \mu^{-1/2}$, a superconducting dome arises.  This simple argument is reflected in our results by the presence of the operator $\mathcal{V}_{\mu}$, defined in \Cref{eq:vmu}, acting on functions on the (rescaled) Fermi surface.

Our results in Section \ref{sec:results} are threefold: first, we confirm a proposed asymptotic formula from \cite{langmann} for the critical temperature at high densities for s--wave superconductivity (to leading order) by proving a more general result for radially symmetric interaction potentials $V$ (Theorem~\ref{thm:1}); second, we provide a counterexample, showing that the assumptions on $V$ from \cite{langmann} are not quite sufficient to conclude a non--monotonic behavior of $T_c$ and need to be slightly strengthened (Proposition~\ref{prop:1}); third, we use these strengthened assumptions to improve the asymptotics obtained in Theorem~\ref{thm:1} to second order with the aid of perturbation theory, and obtain an analogous formula to the ones proven in the weak--coupling and low--density limit (Theorem~\ref{thm:2}). All proofs are given in Section~\ref{sec:proofs}.

\section{Main Results} \label{sec:results}

\subsection{Preliminaries}
 It was proven in \cite{hhss} (see also \cite{hs15} for a more recent review) that the critical temperature for the existence of non--trivial solutions of the BCS gap equation \eqref{gapeq} can be characterized as follows. 
\begin{defi}{\rm (Critical Temperature)} \label{def:1} \\
Let $\mu > 0$, $V \in L^{3/2}(\R^3)$ be real--valued and $K_{T,\mu}$ denote the multiplication operator in momentum space
\begin{equation*}
K_{T,\mu}(p) = \frac{\vert p^2 - \mu\vert}{\tanh\left(\frac{\vert p^2 - \mu \vert}{2T} \right)}\,.
\end{equation*}
The critical temperature for the BCS gap equation \eqref{gapeq} is given by
\begin{equation*}
T_c = \inf \left\{T > 0 \mid K_{T,\mu}(p) + V(x) \ge 0 \right\} \,.
\end{equation*}
\end{defi}
\noindent One might think of the operator $K_{T,\mu}(p)+ V(x)$ as the Hessian in the BCS functional of superconductivity at a normal state (see \cite{hs15}), where the positivity corresponds to the ``stability" of this normal state, which is directly related to the existence of a non--trivial solution of the BCS gap equation \eqref{gapeq}. 
Note that the continuous spectrum of $K_{T,\mu}$ starts at $2T$ and thus $T_c$ is well defined by Sobolev's inequality \cite[Thm.~8.3]{liebloss} since $K_{T,\mu} \sim p^2$ for large~$\vert p\vert$.

Moreover, note that $K_{T,\mu}$ takes its minimum value $2T$ on the $\mathrm{codim-}1$ submanifold $\set{p^2 = \mu}$. Thus, similarly to the weak coupling situation \cite{fhns} and as pointed out by Laptev, Safronov, and Weidl \cite{laptev} (see also \cite{hs10}), the spectrum of $K_{T,\mu}+ V$ is mainly determined by the behavior of $V$ near $\set{p^2 = \mu}$, i.e.~the Fermi sphere. More precisely, as emphasized in the introduction, a crucial role for the investigation of $T_c$ in the high--density limit is played by the (rescaled) operator  $\mathcal{V}_{\mu} : L^2(\mathbb{S}^{2}) \to L^2(\mathbb{S}^{2})$ where
\begin{equation} \label{eq:vmu}
\left(\mathcal{V}_{\mu} u\right)(p) = \frac{1}{(2\pi)^{3/2}} \int_{\Sph^{2}} \hat{V}(\sqrt{\mu} (p-q)) u(q)\, \D\omega(q) \,.
\end{equation}
Here $\D \omega$ denotes the uniform (Lebesgue) measure on the unit sphere $\Sph^2$. The pointwise evaluation of $\hat{V}$ (and thus also on a $\mathrm{codim-}1$ submanifold) is well defined since $\hat{V}$ is continuous for $V \in L^1(\R^3)$. See Remark \ref{rmk:3} for a discussion of the assumption $V \in L^1(\R^3)$ (cf.~also \cite{cueninmerz}). The lowest eigenvalue of $\mathcal{V}_{\mu}$, which we denote by
\begin{equation*}
e_{\mu} = \mathrm{inf\,spec}\, \mathcal{V}_{\mu}
\end{equation*}
will be of particular importance. Note, that $\mathcal{V}_{\mu}$ is a compact operator (so $e_{\mu} \le 0$), which is in fact trace class (see the argument above Equation (3.2) in \cite{fhns}) with
\begin{equation*}
\mathrm{tr}(\mathcal{V}_{\mu}) = \frac{1}{2\pi^2} \int_{\R^3} V(x) \D x\,.
\end{equation*} 
The case $e_{\mu} < 0$ will be important for our main results as it corresponds to an attractive interaction between (some) electrons on the Fermi sphere. Since $\mathcal{V}_{\mu}$ is trace class, a sufficient condition for $e_{\mu}< 0$ is that the trace of $\mathcal{V}_\mu$ is negative, i.e.~$\int V < 0$. Moreover, by considering a trial function that is concentrated on two small sets on the rescaled Fermi sphere $\Sph^{2}$ separated by a distance $\vert p \vert < 2$, one can easily see that $e_{\mu}<0$ if $\vert\hat{V}(p)\vert>\hat{V}(0)$ for some $\vert p\vert < 2 \sqrt{\mu}$. 

In this work, we restrict ourselves to the special case of radial potentials $V$ depending only on $\vert x \vert$, where the spectrum of $\mathcal{V}_{\mu}$ can be determined more explicitly (see, e.g., Section~2.1 in \cite{fhns}). Indeed, if $V$ is radially symmetric, the eigenfunctions of $\mathcal{V}_{\mu}$ are spherical harmonics and the corresponding eigenvalues are 
\begin{equation} \label{eq:eigenvalues}
\frac{1}{2\pi^2} \int_{\R^3} V(x) \left(j_\ell(\sqrt{\mu}\vert x \vert)\right)^2 \D x\,,
\end{equation}
with $\ell \in \mathbb{N}_0$ and where $j_\ell$ denotes the $\ell^{\mathrm{th}}$--order spherical Bessel function. A few important properties of the spherical Bessel functions used in our proofs are collected in Proposition~\ref{prop:2}. By \Cref{eq:eigenvalues}, the lowest eigenvalue $e_\mu$ is thus given by
\begin{equation*}
e_\mu = \frac{1}{2\pi^2} \, \inf_{\ell \in \mathbb{N}_0} \, \int_{\R^3} V(x) \left(j_\ell(\sqrt{\mu}\vert x \vert)\right)^2 \D x \,.
\end{equation*} 
If additionally $\hat{V} \le 0$, the minimal eigenvalue is attained for the constant eigenfunction (i.e.~the spherical harmonic with $\ell = 0$) by the Perron--Frobenius Theorem~and we thus have the more concrete expression 
\begin{equation} \label{eq:emuangmom0}
e_{\mu} = \frac{1}{2\pi^2} \int_{\R^3} V(x) \left(\frac{\sin(\sqrt{\mu}\vert x \vert)}{\sqrt{\mu}\vert x \vert}\right)^2 \D x\,.
\end{equation} 
We refer to Remark \ref{rmk:2} for a discussion of the radiality assumption on $V$.

\subsection{Results}
As desribed in the introduction, our results are threefold: First, we show an asymptotic formula for radial potentials $V$ (Theorem~\ref{thm:1}), including the rigorous confirmation of the result from \cite{langmann} to leading order. Afterwards, we provide a counterexample showing that the assumptions made in \cite{langmann} are not quite sufficient to conclude a non--monotonic behavior of $T_c$, i.e.~a superconducting dome (Proposition~\ref{prop:1}). Finally, by slightly strengthening the assumptions on $V$, we provide an asymptotic formula for the critical temperature valid to second order (Theorem~\ref{thm:2}). All proofs are given in Section~\ref{sec:proofs}.
\begin{thm} \label{thm:1}
Let $V \in L^1(\R^3) \cap L^{3/2}(\R^3)$ be real--valued and radially symmetric. Assume that there exists $\mu_0 > 0$ such that for all $\mu \ge \mu_0$ we have $e_{\mu}< 0$. Then $T_c(\mu) > 0$ for all sufficiently large $\mu$ and 
\begin{equation}
\lim\limits_{\mu \to \infty} \sqrt{\mu} \, e_{\mu} \, \log\frac{\mu}{T_c} = -1\,. \label{eq:thm1}
\end{equation}
\end{thm}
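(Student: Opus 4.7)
The plan is to combine the Birman--Schwinger principle with an asymptotic analysis of $K_{T,\mu}^{-1}$ near the Fermi sphere $\{|p|^2=\mu\}$, in the spirit of the weak-coupling analysis of Frank--Hainzl--Naboko--Seiringer~\cite{fhns}. By Definition~\ref{def:1}, $T_c$ is the infimum of temperatures for which $K_{T,\mu}+V\ge 0$, and the Birman--Schwinger principle reformulates this as the requirement that the self-adjoint compact operator
\begin{equation*}
\mathcal{B}_{T,\mu} \;=\; K_{T,\mu}^{-1/2}\, V\, K_{T,\mu}^{-1/2} \quad\text{on } L^2(\R^3)
\end{equation*}
have lowest eigenvalue exactly $-1$ at $T=T_c$. (Compactness of $\mathcal{B}_{T,\mu}$ follows from $V\in L^{3/2}(\R^3)$ via Kato--Seiler--Simon type bounds, as already used in~\cite{hhss}.) The task thus reduces to computing the asymptotics of $\inf\mathrm{spec}\,\mathcal{B}_{T,\mu}$ in the regime $T/\mu\to 0$.

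The key asymptotic input is that, for sufficiently regular $f$,
\begin{equation*}
\bigl\langle f,\, K_{T,\mu}^{-1} f\bigr\rangle \;=\; \sqrt{\mu}\,\log\frac{\mu}{T}\, \int_{\Sph^{2}} \bigl|\hat f(\sqrt{\mu}\,\omega)\bigr|^2\,\D\omega \;+\; O(\sqrt{\mu}),
\end{equation*}
which I obtain by passing to spherical coordinates $p=r\omega$ and using the change of variable $r^2-\mu = 2\sqrt{\mu}\,s$, whereby the radial integral reduces to $\int \tanh(\sqrt{\mu}\,|s|/T)/|s|\,\D s = 2\log(\mu/T)+O(1)$. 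Introducing the Fermi-sphere trace map $\mathcal{F}_\mu:f\mapsto \hat f(\sqrt{\mu}\,\cdot\,)|_{\Sph^{2}}$, this is the quadratic-form statement
\begin{equation*}
K_{T,\mu}^{-1} \;=\; \sqrt{\mu}\,\log\frac{\mu}{T}\;\mathcal{F}_\mu^*\mathcal{F}_\mu \;+\; \mathcal{R}_{T,\mu},
\end{equation*}
with $\mathcal{R}_{T,\mu}$ a bounded remainder. A direct computation (using the Fourier convention $\hat V(p)=(2\pi)^{-3/2}\int V(x)\E^{-\I p\cdot x}\D x$ together with the surface measure $\mu\,\D\omega$ on $\sqrt{\mu}\,\Sph^{2}$) gives $\mathcal{F}_\mu V \mathcal{F}_\mu^* = \mathcal{V}_\mu$ as operators on $L^2(\Sph^{2})$, exactly matching \eqref{eq:vmu}. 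Combining this with the $AB/BA$ identity for nonzero spectra (applied to $\mathcal{B}_{T,\mu}$ and its cyclic permutation $V K_{T,\mu}^{-1}$) yields
\begin{equation*}
\inf\mathrm{spec}\,\mathcal{B}_{T,\mu} \;=\; \sqrt{\mu}\,\log\frac{\mu}{T}\;e_\mu \;+\; o(1)\qquad\text{as } T/\mu\to 0.
\end{equation*}

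Setting the right-hand side equal to $-1$ gives $\sqrt{\mu}\,e_\mu\,\log(\mu/T_c)=-1+o(1)$, which is \eqref{eq:thm1}. Positivity of $T_c$ for large $\mu$ follows because, under the hypothesis $e_\mu<0$, the leading expression tends to $-\infty$ as $T\to 0^+$ and is continuous in $T$, hence it crosses $-1$ at some finite $T=T_c>0$. The main technical obstacle is establishing the $o(1)$ bound on the error \emph{uniformly in $\mu$}: the analogous weak-coupling analyses~\cite{fhns,hs081} keep $\mu$ fixed and do not track the scaling of the Fermi sphere, whereas here the sphere grows like $\sqrt{\mu}$, so both the Jacobian and the transverse regularity of $\hat V$ on this sphere must be controlled carefully. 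The hypothesis $V\in L^1(\R^3)$ ensures continuity of $\hat V$ (hence the pointwise well-definedness of $\mathcal{V}_\mu$ and of $e_\mu$), while $V\in L^{3/2}(\R^3)$ provides the Sobolev/Kato--Seiler--Simon bounds needed to control the off-shell part of $\mathcal{B}_{T,\mu}$.
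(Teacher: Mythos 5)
Your high-level strategy matches the paper's: use the Birman--Schwinger principle and split $K_{T,\mu}^{-1}$ into a rank-deficient ``Fermi-sphere'' piece $\mathfrak{F}_\mu^\dagger\mathfrak{F}_\mu$ carrying the $\log(\mu/T)$ singularity plus a remainder, then identify the leading eigenvalue with $m_\mu\,e_\mu$. You also correctly flag that the crux is bounding the sandwiched remainder uniformly as $\mu\to\infty$. But you stop exactly there: you assert ``$\mathcal{R}_{T,\mu}$ a bounded remainder'' and then claim $\inf\mathrm{spec}\,\mathcal{B}_{T,\mu}=\sqrt{\mu}\log(\mu/T)\,e_\mu+o(1)$ without proving the $o(1)$, and this is not a routine step. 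A bound $\|\mathcal{R}_{T,\mu}\|=O(\sqrt{\mu})$ (which is what your quadratic-form computation gives) is far too weak: what is needed is $\|V^{1/2}\mathcal{R}_{T,\mu}|V|^{1/2}\|_{\mathrm{op}}\to 0$, since the dominant eigenvalue at $T=T_c$ is of order $1$, not growing.

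The missing input is precisely where the paper uses the radial symmetry of $V$ in an essential, not merely convenient, way. In the paper's decomposition \eqref{bsdecomp}, the remainder is further split as $V^{1/2}(p^2+\kappa^2\mu)^{-1}|V|^{1/2}+A^{(\kappa)}_{T,\mu}$ with a regularization parameter $\kappa>0$; the first piece is treated via its Hilbert--Schmidt norm. For $A^{(\kappa)}_{T,\mu}$ one restricts to test functions of definite angular momentum and performs the \emph{angular} $x,y$-integrals before the radial $p$-integral, which produces the decoupling \eqref{eq:decoupling} in terms of spherical Bessel functions $j_\ell(|p||x|)j_\ell(|p||y|)$. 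Only then can one apply the uniform-in-$\ell$ estimates of Proposition~\ref{prop:2} and, crucially, Lemma~\ref{lem:2} ($\sqrt{\mu}\sup_\ell\int|V|\,|j_\ell(\sqrt{\mu}|\cdot|)|^\alpha\to0$ for $\alpha>6/5$) to conclude $\|A^{(\kappa)}_{T,\mu}\|_{\mathrm{op}}\to 0$ uniformly in $T\le C\mu$. Without radiality this interchange of angular and radial integrations is unavailable, and as the paper stresses in Remark~\ref{rmk:2} the remainder cannot then be shown to be small. Two further points your proposal leaves unaddressed and which the paper handles explicitly: one needs $T_c=o(\mu)$ (not just $T_c\le O(\mu)$ from Lemma~\ref{lem:1}) to invoke the asymptotics \eqref{masymp} of $m_\mu^{(\kappa)}$, which the paper obtains by contradiction using $\sqrt{\mu}e_\mu=o(1)$; and one needs the $\kappa>0$ shift so that the off-shell operator $V^{1/2}(p^2+\kappa^2\mu)^{-1}|V|^{1/2}$ actually vanishes in norm. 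So the route is right but the proof is incomplete at the central estimate.
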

\noindent  Or in other words, we have the asymptotic behavior
\begin{equation}
T_c = \mu \,  \E^{(1+o(1))/(\sqrt{\mu}e_{\mu})}  \label{eq:leadingorder}
\end{equation}
in the limit of large $\mu$. Note, that the right hand side is the same formula as in the weak--coupling case \cite{fhns, hs081} but we have coupling parameter $\lambda = 1$.
\begin{rmk}{\rm (Connection to the result from \cite{langmann})} \label{rmk:langmann}\\
Assume that $V \in L^1(\R^3) \cap L^{3/2}(\R^3)$ is real--valued, radially symmetric and additionally satisfies $\hat{V} \le 0$ and $\hat{V}(0)<0$ (the latter implies that $e_\mu < 0$ for all $\mu >0$). Note that these conditions, which are identical to the ones required in \cite{langmann}, are included in the more general conditions of Theorem~\ref{thm:1}. Then we have, using the notation from \cite{langmann}, that
\begin{equation*} 
\sqrt{\mu} e_{\mu} = \frac{\sqrt{\mu}}{2 \pi^2} \int_{\R^3} V(x)  \frac{\sin^2(\sqrt{\mu} \vert x \vert)}{\mu \vert x \vert^2} \D x = \frac{1}{4\pi^2} \frac{f_{-2V}(4 \mu)}{4\sqrt{\mu}} =: - \lambda\,,
\end{equation*}
where the first equality follows by \Cref{eq:emuangmom0} and after inserting the definition of the function $f_{-2V}$ from \cite{langmann}, the second equality is a simple computation using Fubini. 
By means of Theorem~\ref{thm:1}, we thus confirmed the validity of Equation (6) in \cite{langmann} in the high--density limit to leading order, i.e.
\begin{equation*}
T_c = \mu \, \E^{-(1+o(1))/\lambda}\,.
\end{equation*}
In full generality, the asymptotic formula proposed in Equation (6) in \cite{langmann} reads
\begin{equation*}
T_c = \frac{2 \E^\gamma}{\pi} \mu \exp\left(- \frac{1}{\lambda} + \sum_{n=0}^{\infty} a_n \lambda^n \right)\,,
\end{equation*}
where $\gamma \approx 0.577$ denotes the Euler--Mascheroni constant and $(a_n)_{n \ge 0}$ is a sequence of explicit constants determined by an iterative procedure. The quantity $\lambda$ is understood as an intrinsic small parameter which encodes either a weak--coupling, low--density, or high--density limit, or an appropriate combination.
\end{rmk}

\noindent  In order to obtain a meaningful asymptotic formula of the critical temperature at high densities in a rigorous way, the question to be addressed now is the behavior of $\sqrt{\mu} e_\mu$ in the limit $\mu \to \infty$.  In the following proposition we present a special family of interaction potentials $(V_\alpha)$ showing that the conditions of Theorem~\ref{thm:1} (which include the more restricted conditions from \cite{langmann}) not necessarily lead to a non--monotonic behavior of $T_c$ as claimed in \cite{langmann}, since $\vert \sqrt{\mu} e_\mu \vert \gg \log(\mu)^{-1}$ in the limit of large $\mu$ for this family of potentials. More precisely, the $L^{3/2}(\R^3)$--condition, which essentially concerns the behavior of the interaction potential near the origin, is not quite sufficient to obtain a dome--shaped behavior of $T_c(\mu)$. Since the potentials $(V_\alpha)$ are perfectly well behaved away from the origin and decay rapidly at infinity, they illustrate the significance of the behavior of interaction potentials near the origin for the asymptotics of the critical temperature. It is natural that the critical temperature is sensitive to the short range behavior of the interaction potential, since the interparticle distance as the physically relevant length scale that depends on the particle density tends to zero in the high--density limit. 
\begin{prop}~\label{prop:1}
Let $\alpha \in (1/3,1/2)$ and set
\begin{equation*}
V_\alpha (x) = - \frac{\exp(-\vert x \vert)}{\vert x \vert^2 \left(  \log^2(\vert x \vert)+ 1\right)^{\alpha}}\,.
\end{equation*}
Then the critical temperature $T_c$ associated with $K_{T,\mu} + V_\alpha$ approaches infinity as $\mu \to \infty$. 
\end{prop}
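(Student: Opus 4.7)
The plan is to lower-bound $\vert e_\mu\vert$ via the variational principle using the constant function on $\Sph^{2}$, and then feed the bound into Theorem~\ref{thm:1}. First, one checks that $V_\alpha\in L^1(\R^3)\cap L^{3/2}(\R^3)$. Switching to spherical coordinates, $L^1$-integrability near $r=0$ reduces to $\int_0^1 \vert\log r\vert^{-2\alpha}\,\D r<\infty$ (which holds for every $\alpha>0$), while $L^{3/2}$-integrability reduces (up to constants) to $\int_0^1 r^{-1}\vert\log r\vert^{-3\alpha}\,\D r<\infty$; after the substitution $u=\vert\log r\vert$ this becomes $\int_1^\infty u^{-3\alpha}\,\D u$, which is finite iff $3\alpha>1$. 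This is exactly the role of the lower bound $\alpha>1/3$. Decay at infinity is trivial because of the factor $\E^{-\vert x\vert}$, and $V_\alpha$ is radial, real, and strictly negative by construction.

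Since $V_\alpha$ is radial, formula \eqref{eq:eigenvalues} with $\ell=0$ provides a genuine eigenvalue of $\mathcal{V}_\mu$, so the variational principle gives
\begin{equation*}
e_\mu \;\le\; \frac{1}{2\pi^2}\int_{\R^3} V_\alpha(x)\left(\frac{\sin(\sqrt{\mu}\vert x\vert)}{\sqrt{\mu}\vert x\vert}\right)^{\!2}\,\D x \;=\; -\frac{2}{\pi\mu}\int_0^\infty \frac{\E^{-r}\sin^2(\sqrt{\mu}\,r)}{r^2\,(\log^2 r+1)^\alpha}\,\D r\,.
\end{equation*}
In particular $e_\mu<0$, so Theorem~\ref{thm:1} applies. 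Substituting $s=\sqrt{\mu}\,r$ rewrites the bound as
\begin{equation*}
\sqrt{\mu}\,\vert e_\mu\vert \;\ge\; \frac{2}{\pi}\int_0^\infty \frac{\E^{-s/\sqrt{\mu}}\sin^2 s}{s^2\left((\log s - \tfrac{1}{2}\log\mu)^2+1\right)^\alpha}\,\D s\,.
\end{equation*}
Restricting the integration to a fixed bounded window such as $s\in[1,2]$, the numerator is uniformly bounded below by a positive absolute constant for $\mu$ large, while the denominator is asymptotic to $(\tfrac{1}{2}\log\mu)^{2\alpha}$ as $\mu\to\infty$. Hence there exists $C>0$ such that $\sqrt{\mu}\,\vert e_\mu\vert \ge C\,(\log\mu)^{-2\alpha}$ for all sufficiently large $\mu$.

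Inserting this into the conclusion of Theorem~\ref{thm:1}, rewritten as $\log T_c = \log\mu - (1+o(1))/(\sqrt{\mu}\,\vert e_\mu\vert)$, yields $\log T_c \ge \log\mu - C'(\log\mu)^{2\alpha}$, which tends to $+\infty$ since $2\alpha<1$. Thus $T_c\to\infty$. The only mildly delicate step is the quantitative lower bound in the restricted integral; it just uses that for $s$ in a fixed compact window the factor $\log s - \tfrac{1}{2}\log\mu$ is asymptotic to $-\tfrac{1}{2}\log\mu$. The two endpoint conditions on $\alpha$ play complementary roles: $\alpha>1/3$ is the minimal regularity needed for $V_\alpha\in L^{3/2}(\R^3)$ (so that $T_c$ is defined via Definition~\ref{def:1}), whereas $\alpha<1/2$ is precisely what makes $(\log\mu)^{2\alpha}$ asymptotically smaller than $\log\mu$ so that the correction cannot swallow the leading term.
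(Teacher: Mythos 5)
Your proof is correct and uses essentially the same estimate $\sqrt{\mu}\,\vert e_\mu\vert \gtrsim (\log\mu)^{-2\alpha}$ to conclude via Theorem~\ref{thm:1}, but you reach that estimate by a genuinely cleaner route than the paper. The paper first establishes $\hat{V}_\alpha \le 0$ (by verifying that $\vert x\vert V_\alpha(x)$ is monotonically increasing and invoking the Fourier--positivity criterion from~\cite{Fourier}), so that \Cref{eq:emuangmom0} gives $e_\mu$ \emph{exactly} as the $\ell=0$ eigenvalue. You instead observe that, since $V_\alpha$ is radial, \Cref{eq:eigenvalues} with $\ell=0$ already produces one eigenvalue of $\mathcal{V}_\mu$, so the variational principle gives the one--sided bound $e_\mu \le \tfrac{1}{2\pi^2}\int V_\alpha(x)(\sin(\sqrt{\mu}|x|)/(\sqrt{\mu}|x|))^2\,\D x < 0$. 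Since only a lower bound on $\vert e_\mu\vert$ (i.e.\ an upper bound on the negative number $e_\mu$) is needed, this suffices, and you entirely sidestep the monotonicity computation and the Tuck positivity argument. Your rescaling $s=\sqrt{\mu}r$ and localization to $s\in[1,2]$ is a slightly different, but equally valid, way to extract the $(\log\mu)^{-2\alpha}$ rate compared with the paper's restriction to $x\in(0,\sqrt{\mu}/2)$. The only thing your route gives up is an exact identity for $e_\mu$, which would be relevant if one also wanted an \emph{upper} bound on $T_c$; for the stated conclusion $T_c\to\infty$ that is not needed. Note also that, on your route, the hypothesis $\alpha<1/2$ is used \emph{only} to ensure $(\log\mu)^{2\alpha}=o(\log\mu)$ in the last step, whereas in the paper it is additionally invoked (somewhat unnecessarily, since $\alpha<1$ already suffices for the sign of the derivative) to justify the Fourier--positivity step.
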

\noindent Our observations from Proposition~\ref{prop:1} lead to the following definition of ``admissible potentials", that are slightly better behaved at the origin and, in particular, allow for an analysis of $e_\mu$ (and also all the other eigenvalues of $\mathcal{V}_\mu$) by requiring certain definiteness conditions of $V$ (cf.~Lemma~\ref{lem:3} and Lemma~\ref{lem:4}). 
\begin{defi}{\rm (Admissible potentials)} \label{def:admpot} 
Let $V \in L^1(\R^3) \cap L^{3/2}(\R^3)$ be a real--valued radial function and define
\begin{equation} \label{eq:defs}
s^*_{\pm} := \sup \left\{ s \ge 0 :  \vert \cdot \vert^{-s}V_{\pm} \in L^1(\R^3) \right\} \qquad s^* := \min(s^*_+, s^*_-)\,,
\end{equation}
where $V_\pm = \max\{ \pm V, 0\}$ are the positive and negative parts of $V$. We call $V$ an  \emph{admissible potential} if the following is satisfied: 
\begin{itemize}
\item[(a)] There exists $a>0$ such that 
\begin{equation*}
\sup \left\{ r \ge 0 : \lim\limits_{\varepsilon \to 0} \frac{1}{\varepsilon^r} \int_{B_\epsi} \hspace{-0.5mm} V_{\pm}(x) \D x = 0 \right\} = \sup \left\{ r \ge 0 : \lim\limits_{\varepsilon \to 0} \frac{1}{\varepsilon^r} \int_{B_\epsi} \hspace{-0.5mm}  V_{\pm}\vert_{B_a}^* (x) \, \D x = 0 \right\} \,,
\end{equation*}
where $ V_{\pm}\vert_{B_a}^*$ denotes the symmetric decreasing rearrangement of $V_{\pm}\vert_{B_a}$, the restriction of $V_{\pm}$ to the ball of radius $a$ around $0$,
\item[(b)] if $\vert \cdot \vert^{-2}V\notin L^1(\R^2)$, we have $s^* = s^*_- < s^*_+$, if $\vert \cdot \vert^{-2}V \in L^1(\R^2)$, we have $\int_{\R^3} \hspace{-1mm}\frac{V(x)}{\vert x \vert^2} \D x < 0$,
\item[(c)] $s^* > 1$, and
\item[(d)] if $s^* \ge 53/27$, we have $V \in L^p(\R^3)$ for some~${p >5/3}$. 
\end{itemize}
Condition (d) can be dropped, whenever we have control on the ground state space of $\mathcal{V}_\mu$ in the following sense: There exists $\mu_0 >0$ and $\mathcal{L} \subset \mathbb{N}_0$ with $\vert \mathcal{L} \vert < \infty$, such that for all $\mu \ge \mu_0$, the ground state space of $\mathcal{V}_\mu$ is contained in the subspace of $L^2(\Sph^2)$ spanned by the spherical harmonics with angular momentum $\ell \in \mathcal{L}$. 
\end{defi}
\noindent In a nutshell, an admissible potential is a radial potential $V \in L^1(\R^3) \cap L^{3/2}(\R^3)$, which satisfies the following: 
\begin{itemize}
	\item[(i)] There exists some $a>0$ such that both, positive and negative part, have their strongest singularity in $B_a$ at the origin.
	\item[(ii)] It has a dominating attractive part (for short distances), i.e.~$s^*_- \hspace{-1.5mm}< s^*_+$ resp.~${\int_{\R^3} \hspace{-1mm}\frac{V(x)}{\vert x \vert^2} \D x < 0}$. 
\item [(iii)] It is slightly less divergent at the origin than allowed by the $L^{3/2}(\R^3)$-assumption, i.e.~$s^* > 1$.
\end{itemize} 
The most relevant examples for admissible potentials are the attractive Yukawa and Gaussian potential, 
i.e.
\begin{equation*}
V_{\mathrm{Yukawa}}(x) = -\frac{1}{4\pi \vert x \vert}\E^{-\vert x \vert } \qquad \text{and} \qquad V_{\mathrm{Gauss}}(x) = -  (2\pi)^{-3/2} \E^{-\vert x \vert^2 /2} \,.
\end{equation*}
\begin{rmk} \label{rmk:adm} {\rm (On condition (d) for admissible potentials)}\\
The additional $L^{p}(\R^3)$-assumption with $p >5/3$ for  $s^* \ge 53/27$ in condition (d) is due to technical reasons and will we be explained during the proof of Theorem \ref{thm:2}, which is formulated below.  Note that, since $s^* \ge 53/27$ and $V \in L^1(\R^3)$, this condition is essentially about regularity away from $0$ and infinity. However, our proof would work without change if we only had $p > f(s^*)$, where $f$ has some complicated (explicit) expression (see Lemma \ref{lem:3} and Equation \eqref{eq:fsstar}) and is strictly monotonically increasing between $53/27$ and $2$, and satisfies $f(53/27) = 3/2$ and $f(s^*) = 5/3$ for all $s^* \ge 2$. We do not state Theorem \ref{thm:2} with this slight generalization for simplicity. Whenever we have some control on the ground state space of $\mathcal{V}_\mu$, the  $L^p(\R^3)$-assumption is not necessary. For example, in the special case $\hat V \le 0$,  one can choose $\mathcal{L} = \{ 0 \}$ by means of \Cref{eq:emuangmom0} and completely drop condition (d). 
\end{rmk}
\noindent We will show in Lemma~\ref{lem:4} that for any admissible potential $e_\mu <0$ for $\mu$ large enough. Moreover, for any radial potential $V \in L^1(\R^3) \cap L^{3/2}(\R^3)$ with $e_\mu <0$ and $s^*>1$ (in particular any admissible potential), by application of Theorem~\ref{thm:1}, the critical temperature decays exponentially fast as $\mu \to \infty$ since
\begin{equation} \label{elbound}
\vert \sqrt{\mu} e_\mu \vert \le \frac{1}{2 \pi^2} \left\Vert \frac{V}{\vert \cdot \vert^{{s}}} \right\Vert_{L^1} \sup_{\ell \in \mathbb{N}_0} \left\Vert \vert \cdot \vert^{s/2} j_{\ell} \right\Vert_{L^\infty}^2 \mu^{\frac{1-s}{2}}
\end{equation}
for $s \in (1,s^*)$ and the term involving $j_\ell$ is finite as long as $s \le 5/3$ by uniform decay of spherical Bessel functions (see Proposition~\ref{prop:2}~(iii)). A slightly different bound as given in Lemma~\ref{lem:3} allows to improve this threshold. Note that the class of interaction potentials from Proposition~\ref{prop:1} is not admissible since $s^* = 1$ for these potentials. 

The existence of a maximal critical temperature at some intermediate density (\textit{superconducting dome}), can now be obtained by combining the decay of $T_c$ in the high--density limit from \Cref{eq:leadingorder} and \Cref{elbound} for admissible potentials in the sense of Definition~\ref{def:admpot} to the decay of $T_c$ in the low--density limit, where
\begin{equation} \label{eq:lowdensityTc}
T_c =  \mu\left(\frac{8}{\pi} \E^{\gamma - 2} + o(1)\right) \E^{\pi /(2\sqrt{\mu}a)} 
\end{equation}
as shown in \cite{hs08}. This result was obtained for (not necessarily radially symmetric) real valued interaction potentials $V$, with $V(x)(1+\vert x \vert) \in L^1(\R^3) \cap L^{3/2}(\R^3)$, negative scattering length $a$, and in the absence of bound states. Thus, we rigorously confirmed the \textit{ubiquity of superconducting domes in BCS theory} for a general class of interaction potentials, as claimed in \cite{langmann}.

As our next result, we shall derive the second order correction to \Cref{eq:leadingorder}, i.e.~we shall compute the constant in front of the exponential for admissible potentials. For this purpose we define the operator $\mathcal{W}_\mu^{(\kappa)}$ on $u \in L^2(\Sph^2)$ via its quadratic form 
\begin{align} \label{eq:defWmu}
\big\langle u \big\vert\mathcal{W}_\mu^{(\kappa)} \big\vert u \big\rangle = \sqrt{\mu}\int_{0}^{\infty} \D \vert p \vert &\left(\frac{\vert p \vert^2}{\vert \vert p \vert^2 - 1 \vert} \left[\int_{\Sph^{2}}\D \omega(p) \left(\vert \hat{\varphi}(\sqrt{\mu} p) \vert^2 - \vert \hat{\varphi}(\sqrt{\mu} p/\vert p \vert) \vert^2\right)\right]\right. \nonumber \\
&+ \left. \frac{\vert p\vert^2}{\vert p\vert^2 + \kappa^2} \int_{\Sph^{2}} \D \omega(p) \vert \hat{\varphi}(\sqrt{\mu} p/\vert p \vert) \vert^2
\right)
\end{align}
for fixed $\kappa \ge  0$ (cf.~Equation (13) in \cite{hs081} for an analogous definition in the weak coupling case with $\kappa = 0$).
Here, we denote $\hat{\varphi}(p) = (2\pi)^{-3/2} \int_{\Sph^{2}} \hat{V}(p-\sqrt{\mu}q) u(q) \D \omega(q)$, and $(\vert p\vert , \omega(p)) \in (0,\infty) \times \Sph^2$ are spherical coordinates for $p \in \mathbb{R}^3$. Since $V \in L^1(\R^3)$, the map $\vert p \vert \mapsto \int_{\Sph^{2}} \D \omega(p) \vert \hat{\varphi}(p)\vert^2$ is Lipschitz continuous for any $u \in L^2(\Sph^2)$, such that the radial integral in \Cref{eq:defWmu} is well defined even in the vicinity of $\vert p\vert \sim 1$. For large $\vert p \vert$ the integral converges since $V \in L^{3/2}(\R^3)$.  Although we formulate our result in Theorem \ref{thm:2} only for $\kappa = 0$, the case of a positive parameter $\kappa >0$ is crucial in the proof of this statement, as it ensures, e.g., that the first term in the decomposition of the Birman--Schwinger operator associated with $K_{T,\mu} + V$ is small (cf.~\Cref{bsdecomp}). Whenever it does not lead to confusion, we refer to some $\kappa$--dependent quantity at $\kappa = 0$ by simply dropping the  $(\kappa)$--superscript. 

Now, we define the operator
\begin{equation} \label{eq:Bmu}
\mathcal{B}_\mu^{(\kappa)} = \frac{\pi}{2}\left(\mathcal{V}_\mu - \mathcal{W}_\mu^{(\kappa)}\right)\,,
\end{equation}
which measures the strength of the interaction potential near the Fermi surface up to second order and let $b_\mu^{(\kappa)}$ denote its lowest eigenvalue, 
\begin{equation} \label{eq:bmu}
b_\mu^{(\kappa)} = \mathrm{inf \, spec} \, \mathcal{B}_\mu^{(\kappa)}\,.
\end{equation}
We introduced the factor $\pi/2$ in \Cref{eq:Bmu} since $b_\mu = b_\mu^{(0)}$ has the interpretation of an effective scattering length, which is best illustrated in the case of small $\mu$ (see Proposition~1 in \cite{hs081}). Moreover, we will see in the proof of Theorem~\ref{thm:2} that if $e_\mu <0$ then also $b_\mu^{(\kappa)}<0$ for large enough $\mu$. With the aid of $b_\mu^{(\kappa)}$ we can now state our second main result concerning the asymptotic formula for the critical temperature valid up to second order. 
\begin{thm}~\label{thm:2} Let $V$ be an admissible potential. Then the critical temperature $T_c$ is positive and satisfies
	\begin{equation*} 
\lim\limits_{\mu \to \infty} \left(\log\frac{\mu}{T_c} + \frac{\pi}{2\sqrt{\mu} b_\mu}\right) = 2-\gamma -  \log(8/\pi)\,.
	\end{equation*} 
\end{thm}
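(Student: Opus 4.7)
The plan is to follow the Birman--Schwinger framework adapted from the weak--coupling analysis in \cite{hs081}. By the Birman--Schwinger principle, $T_c$ is the unique temperature at which the compact self--adjoint operator
\[
 B_{T,\mu} := (\mathrm{sgn}\,V)|V|^{1/2}K_{T,\mu}^{-1}|V|^{1/2}\quad\text{on }L^2(\R^3)
\]
has $-1$ as its lowest eigenvalue. I would introduce the fibered operator $\mathfrak{F}_{\mu}:L^2(\R^3)\to L^2(\Sph^{2})$ defined by $(\mathfrak{F}_\mu\psi)(\omega)=\mu^{1/4}\hat\psi(\sqrt{\mu}\omega)$, so that the operator $(\mathrm{sgn}\,V)|V|^{1/2}\mathfrak{F}_\mu^{*}\mathfrak{F}_\mu|V|^{1/2}$ is isospectral (up to a factor of $\sqrt{\mu}$) to $\mathcal{V}_\mu$ on $L^2(\Sph^{2})$. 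Using $K_{T,\mu}^{-1}(p)=\tanh(|p^2-\mu|/(2T))/|p^2-\mu|$ together with a regularizing parameter $\kappa>0$, one obtains the decomposition
\[
 B_{T,\mu} \;=\; m^{(\kappa)}_{T,\mu}\,(\mathrm{sgn}\,V)|V|^{1/2}\mathfrak{F}_\mu^{*}\mathfrak{F}_\mu|V|^{1/2} \;+\; \mathcal{R}^{(\kappa)}_{T,\mu},
\]
where the scalar $m^{(\kappa)}_{T,\mu}$ captures the logarithmic divergence along the Fermi sphere and the remainder $\mathcal{R}^{(\kappa)}_{T,\mu}$, when compressed onto the ground state space of $\mathcal{V}_\mu$, reproduces (up to the factor $\sqrt{\mu}$) the operator $-\mathcal{W}^{(\kappa)}_\mu$ from \Cref{eq:defWmu}.

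The next step is an asymptotic evaluation of the scalar $m^{(\kappa)}_{T,\mu}$. Substituting $s=p^2-\mu$, carefully handling the integrable singularity at $s=0$, and invoking the classical identity $\int_0^{U}\tanh(u)/u\,\D u=\log(4\E^{\gamma}U/\pi)+o(1)$ as $U\to\infty$, one gets for any $\kappa$ with $T\ll\kappa^{2}\ll\mu$ the expansion
\[
 m^{(\kappa)}_{T,\mu} = \log\!\tfrac{\mu}{T} + \gamma + \log\!\tfrac{8}{\pi} - 2 + o(1),\qquad \mu\to\infty,
\]
which is the source of the constant $2-\gamma-\log(8/\pi)$ on the right--hand side of the claim. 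Substituting this into the eigenvalue equation $\mathrm{inf\,spec}\,B_{T_c,\mu}=-1$ and applying first--order perturbation theory on the ground state space of $\mathcal{V}_\mu$, recalling that $\mathcal{B}_{\mu}^{(\kappa)}=(\pi/2)(\mathcal{V}_\mu-\mathcal{W}_\mu^{(\kappa)})$ by \Cref{eq:Bmu}, rearranges the relation to
\[
 -1 = \tfrac{2}{\pi}\sqrt{\mu}\;m^{(\kappa)}_{T_c,\mu}\,b^{(\kappa)}_\mu + o\bigl(\sqrt{\mu}\,|b^{(\kappa)}_\mu|\bigr),
\]
from which the claimed limit follows after sending $\kappa\to 0$ in the final identity (the $\kappa$--dependence cancels, since $b_\mu^{(\kappa)}\to b_\mu$ and the scalar piece compensates).

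The principal technical hurdle is the \emph{uniform} smallness of $\mathcal{R}^{(\kappa)}_{T,\mu}$ relative to the leading term on the correct spectral subspace. Outside a finite window of angular--momentum channels the bound \Cref{elbound}, sharpened in Lemma~\ref{lem:3}, must be combined with Birman--Schwinger / Cwikel--type estimates to rule out spurious near--zero eigenvalues arising from high harmonics; this is precisely the role of admissibility condition~(d), whose a priori finiteness of the set $\mathcal{L}$ of active angular momenta makes the perturbative step well posed, while condition~(c) (namely $s^{*}>1$) ensures $\sqrt{\mu}\,|e_\mu|\to 0$ via \Cref{elbound} and therefore places us in the perturbative regime. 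The remaining bookkeeping---in particular the non--self--adjoint factorization $(\mathrm{sgn}\,V)|V|^{1/2}\cdots|V|^{1/2}$---is handled by the usual conjugation with the unitary involution $J=\mathrm{sgn}\,V$; with this in place, the perturbative argument mirrors that of \cite{hs081,hs08}, with the small parameter $\lambda$ there replaced by $\sqrt{\mu}\,b_\mu$, and the positive parameter $\kappa$ chosen (say $\kappa^{2}=\sqrt\mu$) to simultaneously suppress the ``first term'' in \Cref{eq:defWmu} and dominate $T_c$.
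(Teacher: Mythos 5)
Your proposal captures the right framework (Birman--Schwinger, decomposition into a scalar $m$-piece plus remainder, perturbation theory on the ground state of $\mathcal{V}_\mu$) and arrives at the correct constant, but it deviates from the paper's route in a way that leaves substantive gaps.

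First, the paper keeps $\kappa>0$ \emph{fixed} throughout. This yields (Lemma \ref{lem:5}) a $\kappa$-dependent asymptotic with an explicit $-\kappa\pi/2$ shift on the right-hand side, and then a separate argument (Lemma \ref{lem:6}) shows that passing from $b_\mu^{(\kappa)}$ to $b_\mu$ produces exactly a compensating $+\kappa\pi/2$, using the identity $\langle u|\mathcal{W}_\mu - \mathcal{W}_\mu^{(\kappa)}|u\rangle = \kappa\,\frac{\pi}{2}\sqrt{\mu}\,\|\mathcal{V}_\mu u\|^2$ together with first-order perturbation theory and the lower bound on $|e_\mu|$ from Lemma \ref{lem:4}. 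You instead propose taking $\kappa\to 0$ \emph{along with} $\mu\to\infty$ (e.g.\ $\kappa^2=\sqrt\mu$ in your unrescaled convention), which is why your $m$-asymptotic lacks the $\kappa\pi/2$ term. That is a genuinely different route and plausibly can be made to work, but it then requires tracking the $\kappa$-dependence of every error estimate: the Hilbert--Schmidt bound on $V^{1/2}(p^2+\kappa^2\mu)^{-1}|V|^{1/2}$ degrades as $\kappa\to 0$; the radial $p$-integrals controlling $A_{T,\mu}^{(\kappa)}$ acquire $\kappa$-dependence; and the comparison $|b_\mu^{(\kappa)}-b_\mu|\sim \kappa\sqrt\mu\,e_\mu^2$ must be shown small relative to $\sqrt\mu\,b_\mu^2$. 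You assert ``the $\kappa$-dependence cancels, since $b_\mu^{(\kappa)}\to b_\mu$ and the scalar piece compensates'' without verifying any of this; that is precisely where the paper spends the content of Lemma \ref{lem:6}.

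Second, you mischaracterize admissibility condition (d). It is an $L^p$-integrability requirement (for $s^*\ge 53/27$, $V\in L^p$ with $p>5/3$) used in Lemma \ref{lem:3} to obtain the sharper exponent $\beta_p^*$ so that $4\beta^*-\tfrac{3}{2}\min(s^*,2)-\tfrac12>0$ holds and the geometric-expansion and perturbation-theory errors close; it is \emph{not} an ``a priori finiteness of the set $\mathcal{L}$ of active angular momenta.'' The finite-$\mathcal{L}$ assumption appears in Definition \ref{def:admpot} as an \emph{alternative} under which (d) may be dropped (and then only $\beta^*_{3/2}$ is needed). Relatedly, the paper does not invoke Cwikel-type estimates to ``rule out spurious near-zero eigenvalues''; the control across angular-momentum channels comes entirely from the uniform-in-$\ell$ spherical-Bessel bounds of Proposition~\ref{prop:2} and Lemma~\ref{lem:3}. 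As it stands, your sketch identifies the correct skeleton but leaves unproved the two steps that carry the actual weight: the quantitative removal of the $\kappa$-dependence and the verification that the remainder is sufficiently subleading (i.e.\ the inequality \eqref{eq:perttheoryestimate}, which is exactly what condition (d) is designed to guarantee).
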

\noindent  In other words, 
\begin{equation*} 
T_c = \mu \left(\frac{8}{\pi}\E^{\gamma-2} + o(1)\right)\E^{\pi/(2 \sqrt{\mu} b_\mu)}
\end{equation*}
in the limit $\mu \to \infty$. Similarly to Theorem~\ref{thm:1}, this formula is in complete analogy to the weak--coupling case \cite{hs081} (replace $V \to \lambda V$ and take the limit $\lambda \to 0$) but we have coupling parameter $\lambda =1$ here. As discussed in the introduction, this analogy is not entirely surprising. In physical terms, only those fermions with momenta close to the Fermi surface $\{p^2 = \mu \}$ contribute to the superconductivity. Therefore, due to the decay of the interaction $\hat{V}$ in Fourier space, the high--density limit, $\mu \to \infty$, is effectively a weak--coupling limit.

The constant in front of the exponential is in particular relevant for obtaining the universality of the ratio of the critical temperature and the energy gap, which is achieved in \cite{HLenergygap}, where a similar asymptotic formula for the energy gap is proven. 
\begin{rmk}{\rm (Radiality)} \label{rmk:2} \\
The assumption of the interaction potential being radially symmetric enters the proofs of our main theorems in a crucial way. On the one hand, the radial symmetry allows an additional averaging over the sphere $\Sph^2$ in position space in the proof of Theorem~\ref{thm:1}, which leads to a ``decoupling" of the position variables $x$ and $y$ (cf.~\Cref{eq:decoupling}) as the arguments of integral kernels of operators that appear after employing the Birman--Schwinger principle \cite{hhss, fhns, hs15}. Without this averaging the supposed error terms in \Cref{bsdecomp} could not be concluded to be small. On the other hand, the radial symmetry enables us to obtain useful bounds on the quantity $e_\mu$ (cf.~Lemma~\ref{lem:2}, Lemma~\ref{lem:3}, and Lemma~\ref{lem:4}), which naturally appears in the obtained asymptotics in Theorem~\ref{thm:1} and Theorem~\ref{thm:2}. Although the assumption of a radial potential is a loss of generality compared to the weak coupling \cite{fhns, hs081} and low density\cite{hs08} situation, the case of an isotropic interactions seems physically the most relevant and natural. 
\end{rmk}
\begin{rmk} {\rm (Potentials with slow decay at infinity)} \label{rmk:3} \\
The recent work \cite{cueninmerz} by Cuenin and Merz indicates how to generalize our results to interaction potentials with slow decay at infinity, i.e.~which fail to satisfy $V \in L^1(\R^3)$. The main idea is to employ the Tomas--Stein Theorem~to define the Fourier transform of the potential on the $\mathrm{codim-}1$ submanifold $\Sph^2 \subset \R^3$ having non--vanishing curvature. Moreover, by using the methods  from \cite{gontier}, where Gontier, Hainzl, and Lewin originally studied a lower bound on the Hartree--Fock energy of the electron gas, one can see that  
\begin{equation} \label{eq:gontierbound}
T_c \le C_1\,  \mu \exp\left(- C_2\,  \mu^{1/4}\right) 
\end{equation}
for any real--valued potential $V$ satisfying $\vert\cdot\vert V \in L^{\infty}(\R^3)$. A detailed proof of this estimate is given in Section \ref{sec:proofs}. Note that for an admissible potential $V$ that satisfies $\vert \cdot \vert V \in L^\infty(\R^3)$, we have $s^* \ge 2$ and infer by \Cref{elbound} that the bound provided by \Cref{eq:gontierbound} is not optimal. Although these results indicate that it is mathematically possible to deal with slow decay at infinity, it seems physically natural to assume fast decay at infinity, at least in the high--density limit for an effective interaction potential, when the phenomenon of screening plays an important~role.
\end{rmk}

\section{Proofs} \label{sec:proofs}
The most important tool for our proofs will be the Birman--Schwinger principle (see \cite{hhss, fhns, hs15}). According to this principle, $T_c$ is determined by the fact that for $T= T_c$ the smallest eigenvalue of 
\begin{equation*}
B_{T,\mu} = V^{1/2} \frac{1}{K_{T,\mu}} \vert V \vert^{1/2}
\end{equation*}
equals $-1$. Here, we used the notation $V(x)^{1/2} = \mathrm{sgn}(V(x)) \vert V(x)\vert^{1/2}$. The main simplification is that the study of the spectrum of the unbounded operator $K_{T,\mu} + V$ reduces to identifying the singular part of the compact Birman--Schwinger operator. With this in mind, our proofs will build on a convenient decomposition of $B_{T,\mu}$ in a dominant singular term and other error terms. 

Let ${\mathfrak{F}_{\mu}} : L^1(\R^3) \to L^2(\Sph^{2})$ denote the rescaled Fourier transform restricted to $\Sph^{2}$ with
\begin{equation*}
\left(\mathfrak{F}_{\mu} \psi \right)(p) = \frac{1}{(2\pi)^{3/2}} \int_{\R^3} \E^{-\mathrm{i} \sqrt{\mu} p\cdot x} \psi(x) \D x\,,
\end{equation*}
which is well--defined by the Riemann--Lebesgue Lemma.
Since $V \in L^1(\R^3)$, the multiplication with $\vert V \vert^{1/2}$ is a bounded operator from $L^2(\R^3)$ to $L^1(\R^3)$, and hence $\mathfrak{F}_{\mu} \vert V \vert^{1/2}$ is a bounded operator from $L^2(\R^3)$ to $L^2(\Sph^{2})$. A further important ingredient in our proofs is to study the asymptotic behavior of 
\begin{equation*}
m_{\mu}^{(\kappa)}(T) = \frac{1}{4\pi} \int_{\R^3} \left(\frac{1}{K_{T,\mu}(p)} - \frac{1}{p^2+\kappa^2\mu}\right)\D p
\end{equation*}
for fixed $\kappa >0$, which was done in a similar way for the low--density and weak--coupling limit of the critical temperature and the energy gap (see \cite{hs081, hs08, hs15, lauritsen}). Indeed, using Lemma~1 from \cite{hs081} one can easily see that
\begin{equation} \label{masymp}
m_{\mu}^{(\kappa)}(T) = \sqrt{\mu} \left( \log \frac{\mu}{T} + \gamma - 2+ \log\frac{8}{\pi} + \kappa \, \frac{\pi}{2} + o(1)\right)
\end{equation}
as long as $T/\mu \to 0$.
Using the definitions above, we arrive at our convenient decomposition, which we define as
\begin{equation} \label{bsdecomp}
B_{T,\mu} = V^{1/2} \frac{1}{p^2 + \kappa^2 \mu} \vert V \vert^{1/2} + m_\mu^{(\kappa)}(T) \, V^{1/2} {\mathfrak{F}_{\mu}}^\dagger{\mathfrak{F}_{\mu}} \vert V \vert^{1/2} + A^{(\kappa)}_{T,\mu}
\end{equation}
for $\kappa > 0$, where $A_{T,\mu}^{(\kappa)}$ is implicitly defined. For the first term to be small, we need that $\kappa > 0$. For the second term, note that 
\begin{equation*}
V^{1/2} {\mathfrak{F}_{\mu}} ^\dagger{\mathfrak{F}_{\mu}} \vert V \vert^{1/2}
\end{equation*}
is isospectral to $ \mathcal{V}_\mu = {\mathfrak{F}_{\mu}} V {\mathfrak{F}_{\mu}}^\dagger$.   In fact, the spectra agree at first except possibly at $0$, but $0$ is in both spectra as the operators are compact on an infinite dimensional space. 

This second term will be the dominant term, which is how the quantity $e_\mu$ appears in the asymptotic formulas in our main theorems, whereas the first and third term are negligible error terms in the limit $\mu \to \infty$. Showing this, is the objective of the proofs of Theorem~\ref{thm:1} and Theorem~\ref{thm:2}

\textit{A priori}, it is not clear, how $T_c$ behaves at high densities. Therefore, before we go to the proofs of Theorem~\ref{thm:1} and Theorem~\ref{thm:2}, let us fix the following 
\begin{lem} \label{lem:1} 
$T_c \le O(\mu)$ as $\mu \to \infty$. 
\end{lem}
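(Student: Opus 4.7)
The plan is to show that the operator $K_{T,\mu} + V$ is already non-negative whenever $T \ge \mu/2 + C$ for some constant $C = C(V)$ independent of $\mu$. By Definition~\ref{def:1}, this immediately yields $T_c \le \mu/2 + C(V) = O(\mu)$ as $\mu \to \infty$.

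First, I would establish the pointwise lower bound
\begin{equation*}
K_{T,\mu}(p) \ge T + \frac{p^2-\mu}{2} \qquad \text{for all } p \in \R^3 \text{ and all } T,\mu > 0.
\end{equation*}
This follows from two elementary facts: $\tanh(y) \le y$ and $\tanh(y) \le 1$ for $y \ge 0$. Applied to $y = |p^2-\mu|/(2T)$ they give $K_{T,\mu}(p) \ge 2T$ and $K_{T,\mu}(p) \ge |p^2-\mu|$ respectively, so $K_{T,\mu}(p) \ge \tfrac{1}{2}(2T+|p^2-\mu|) \ge T + (p^2-\mu)/2$. As an operator inequality this reads $K_{T,\mu} \ge (T - \mu/2) + \tfrac{1}{2}(-\Delta)$.

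Second, I would invoke the fact that $V \in L^{3/2}(\R^3)$ is form-bounded relative to $-\Delta$ with relative bound zero. By Sobolev's inequality \cite[Thm.~8.3]{liebloss} combined with H\"older, $\langle \psi, |V|\psi\rangle \le C_S \Vert V\Vert_{L^{3/2}} \langle \psi, -\Delta \psi\rangle$. Splitting $V = V \chi_{|V|\le N} + V\chi_{|V|>N}$ and noting that the $L^{3/2}$-norm of the second piece tends to zero as $N \to \infty$ by dominated convergence, one obtains: for every $\varepsilon > 0$ there exists $C_\varepsilon < \infty$ with $V \ge -\varepsilon(-\Delta) - C_\varepsilon$ as quadratic forms on $H^1(\R^3)$.

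Combining the two estimates with the choice $\varepsilon = 1/2$ yields $K_{T,\mu} + V \ge T - \mu/2 - C_{1/2}$, which is non-negative whenever $T \ge \mu/2 + C_{1/2}$. Hence $T_c \le \mu/2 + C_{1/2} = O(\mu)$. I do not anticipate a genuine obstacle; the one point requiring care is that the Sobolev--H\"older estimate alone is insufficient (it would need $\Vert V\Vert_{L^{3/2}}$ small), so the truncation argument ensuring relative bound zero is essential.
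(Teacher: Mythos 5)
Your proof is correct and follows essentially the same route as the paper: lower bound $K_{T,\mu}$ via $\tanh(t)\le\min(1,t)$ to obtain $K_{T,\mu}\ge T-\mu/2+\tfrac12 p^2$, then absorb $V$ by infinitesimal form-boundedness relative to $-\Delta$, which Sobolev's inequality furnishes for $V\in L^{3/2}(\R^3)$. The only difference is cosmetic: you spell out the truncation argument ensuring relative bound zero, which the paper leaves implicit in its citation of Sobolev's inequality, and you obtain the slightly sharper bound $T_c\le\mu/2+C$ versus the paper's $T_c\le\mu$, though both yield $O(\mu)$.
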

\begin{proof}
Since $\tanh(t) \le \min(1,t)$ for $t \ge 0$, we have
\begin{align*}
K_{T,\mu} + V &\ge \frac{1}{2} \left(\vert p^2 - \mu \vert +2T\right) + V \\
& \ge \frac{1}{2} \left( p^2 + \mu  +2V\right) + (T-\mu)\,.
\end{align*}
The first term is non--negative for sufficiently large $\mu$ by application of Sobolev's inequality \cite[Thm.~8.3]{liebloss} using $V\in L^{3/2}(\R^3)$. Thus, by Definition \ref{def:1}, we obtain $T_c \le \mu$. 
\end{proof}
\noindent In the proof of Theorem~\ref{thm:1} below, we will in fact show that $T_c = o(\mu)$, so  \Cref{masymp} gives the correct asymptotic behavior.

\subsection{Proof of Theorem~\ref{thm:1}}
\begin{proof}[Proof of Theorem~\ref{thm:1}]
Fix $\kappa >0$. As outlined above, the strategy of the proof is to show that the first and the third term in the decomposition \eqref{bsdecomp} vanish in operator norm in the high--density limit and thus the asymptotic behavior is entirely determined by the spectrum of the operator in the second term. We discuss this in detail now. 

 For the first term, note that the Fourier transform of $\frac{1}{p^2 + \kappa^2 \mu}$ is given by $\frac{\E^{- \kappa \sqrt{\mu} \vert x \vert}}{\vert x \vert}$, up to a constant. Thus the Hilbert--Schmidt norm $\Vert \cdot \Vert_{\mathrm{HS}}$, which is always an upper bound for the operator norm $\Vert \cdot \Vert_{\mathrm{op}}$, is given by
\begin{equation*}
\left\Vert V^{1/2} \frac{1}{p^2 + \kappa ^2 \mu} \vert V\vert^{1/2} \right\Vert_{\mathrm{HS}}^2 = C \int_{\R^3}\D x \int_{\R^3}\D y \, \vert V(x) \vert \frac{\E^{- 2 \kappa \sqrt{\mu} \vert x-y \vert}}{\vert x -y\vert^2}\vert V(y) \vert\,.
\end{equation*}
This vanishes as $\mu \to \infty$ by an application of the dominated convergence theorem in combination with the Hardy--Littlewood--Sobolev inequality \cite[Thm.~4.3]{liebloss}. Here and in the following, we shall use the notation $c$ and $C$ for generic positive small resp.~large constants, possible having a different value in each appearance. If we want to emphasize the dependence on a certain parameter, we add a subscript, e.g.~by writing $c_\delta$ or $C_{p,a}$.

For the third term, we will heavily use the radiality of $V$. In fact, since $V$ is radially symmetric, every eigenfunction of $K_{T,\mu}$ and thus $B_{T}$ will have definite angular momentum and we can focus on $f,g \in L^2(\R^3)$ of the form $f(x) = f(\vert x \vert) Y_\ell^m(\hat{x})$ resp.~$g(x) = g(\vert x \vert) Y_{\ell'}^{m'}(\hat{x})$, with a slight abuse of notation, where $\hat{x} = x/\vert x \vert$ denotes the unit vector in direction $x$. Now we aim to bound $\langle f\vert  A^{(\kappa)}_{T,\mu}\vert g\rangle$, uniformly in $(\ell, \ell')$ (and $(m,m')$). As $A_{T,\mu}^{(\kappa)}$ has integral kernel
\begin{equation*}
A_{T,\mu}^{(\kappa)}(x,y) = C V^{1/2}(x) \vert V(y) \vert^{1/2} \int_{\R^3} \left(\frac{1}{K_{T,\mu}(p)}- \frac{1}{p^2 + \kappa^2 \mu}\right) \left(\E^{\I p\cdot (x-y)}- \E^{\I \sqrt{\mu}\hat{p}\cdot (x-y)}\right) \D p\,,
\end{equation*}
and using the radial symmetry of $V$ we arrive at 
\begin{align}
 \big\langle f\big\vert  A^{(\kappa)}_{T,\mu}\big\vert g\big\rangle\, = \ &C \int_{0}^{\infty} \D \vert x\vert\,  |x|^2 \int_{0}^{\infty} \D \vert y\vert \, |y|^2 \bar{f}(\vert x\vert) V^{1/2}(\vert x\vert) \vert V(\vert y\vert) \vert^{1/2} g(|y|) \label{eq:firstline}\\
&\times \ \int_{\R^3} \D p \, \left(\frac{1}{K_{T,\mu}(p)}- \frac{1}{p^2 + \kappa^2 \mu}\right)  \label{eq:secondline}\\
& \times \ \int_{\Sph^{2}} \D\omega(x) \int_{\Sph^{2}}\D\omega(y) \overline{Y}_\ell^m(\hat{x}) Y_{\ell'}^{m'}(\hat{y})\left(\E^{\I p\cdot (x-y)}- \E^{\I \sqrt{\mu}\hat{p}\cdot (x-y)}\right) \,.\label{eq:thirdline}
\end{align}
Note that the angular integrals of $x$ and $y$ can be performed first only by the radial symmetry of $V$. If $V$ were not radially symmetric, we would have had to compute the angular integral of $p$ first and would have ended up with completely different integrals to estimate.
Now, using the plane wave expansion $\E^{\I p\cdot x} = 4\pi \sum_{\ell=0}^{\infty} \sum_{m= -\ell}^{\ell} \I^\ell j_\ell(|p||x|) Y_{\ell}^m(\hat p) \overline{Y}_\ell^m(\hat x)$, the last line~\eqref{eq:thirdline} evaluates to 
\begin{equation} \label{eq:decoupling}
16\pi^2 \, (-\I)^{\ell + \ell'}\left(j_{\ell}(\vert p \vert \vert x \vert ) j_{\ell'}(\vert p \vert \vert y \vert ) - j_{\ell}(\sqrt{\mu} \vert x \vert ) j_{\ell'}(\sqrt{\mu} \vert y \vert )\right)\overline{Y}_\ell^m(\hat{p}) Y_{\ell'}^{m'}(\hat{p}) \,.
\end{equation}
In order to get a non-zero angular $p$-integral from the second line \eqref{eq:secondline}, we need to have $\ell = \ell'$ and $m=m'$, by orthogonality of spherical harmonics. We can hence focus on that case and write $x$, $y$, and $p$ instead of $\vert x \vert$, $\vert y \vert $, and $\vert p \vert$, respectively, such that \eqref{eq:secondline} and \eqref{eq:thirdline} combine to (a constant times)
\begin{equation}
\int_{0}^{\infty} \D p \, p^2 \left(\frac{1}{K_{T,\mu}(p)}- \frac{1}{p^2 + \kappa^2 \mu}\right)\left(j_\ell(p x ) j_\ell(p y ) - j_\ell(\sqrt{\mu} x ) j_\ell(\sqrt{\mu} y )\right) \,. \label{eq:CTmu}
\end{equation}
After changing the integration variable $p \to p/\sqrt{\mu}$ and inserting $\pm j_\ell(p \sqrt{\mu} x ) j_\ell(\sqrt{\mu} y )$, we use the uniform Lipschitz continuity and the uniform decay of spherical Bessel functions (Proposition \ref{prop:2} (ii) and Proposition \ref{prop:2} (iii)) to obtain
\begin{align*}
\left\vert \eqref{eq:CTmu} \right\vert \le C \mu^{1/2}\hspace{-2mm} &\int_{0}^{\infty}   \D p \, p^2 \left\vert \frac{1}{K_{T/\mu,1}(p)}- \frac{1}{p^2 + \kappa^2}\right\vert \vert p-1\vert^{\epsi} \left(\frac{1}{p^\epsi}+ 1\right)  \\
\times & \left(\vert j_\ell(p\sqrt{\mu}x) \vert^{1-11 \epsi/5} + \vert j_\ell(\sqrt{\mu}x) \vert^{1-11 \epsi/5} \right) \hspace{-1mm}\left(\vert j_\ell(p\sqrt{\mu}y) \vert^{1-11 \epsi/5} + \vert j_\ell(\sqrt{\mu}y) \vert^{1-11 \epsi/5} \right)\,,
\end{align*}
for any $\epsi \in (0,2/11)$. Using that and by employing Hölder for the integrals over $x$ and $y$ in \Cref{eq:firstline}, we get
\begin{align}
\big\vert \big\langle f\big\vert  A^{(\kappa)}_{T,\mu}\big\vert g\big\rangle\big\vert \ \le \  &C \,  \Vert f\Vert_{L^2}  \, \Vert g\Vert_{L^2}   \int_{0}^{\infty} \D p \, \left\vert \frac{1}{K_{T/\mu,1}(p)}- \frac{1}{p^2 + \kappa^2 }\right\vert \vert p-1\vert^{\epsi} \left(\frac{1}{p^\epsi}+ 1\right)p^2\\
&\times \,\mu^{1/2}  \int_{\R^3} \D x \,  \vert V (x)\vert \left(\vert j_\ell(p\sqrt{\mu}\vert x \vert) \vert^{2-22\epsi/5} + \vert j_\ell(\sqrt{\mu}\vert x \vert) \vert^{2-22\epsi/5}\right)  \,. \label{eq:omubound}
\end{align}
 In Lemma~\ref{lem:2} below (as $\epsi <2/11$ we have $2-22\epsi/5 >6/5$), we show that the last line \eqref{eq:omubound} can be estimated by 
\begin{equation*}
\eqref{eq:omubound} \le  \left(\frac{1}{p} + 1\right) \vert o(1)\vert \,,
\end{equation*}
where $o(1)$ is some function of $\mu$ that vanishes as $\mu \to \infty$. Thus, we arrive at 
\begin{align*}
\big\vert \big\langle f\big\vert  A^{(\kappa)}_{T,\mu}\big\vert g\big\rangle\big\vert \ \le \  &C \, \vert o(1) \vert \,   \Vert f\Vert_{L^2}\, \Vert g\Vert_{L^2}    \int_{0}^{\infty} \D p \,  \left\vert \frac{1}{K_{T/\mu,1}(p)}- \frac{1}{p^2 + \kappa^2}\right\vert \vert p-1\vert^{\epsi} p^{1-\epsi}(1+p^{1+\epsi}) \,,
\end{align*}
where the integral is uniformly bounded (since $\kappa >0$)  as long as $T \le C \mu$ and we conclude 
\begin{equation*}
\limsup\limits_{\mu \to \infty} \sup_{0<T\le C \mu} \left\Vert A_{T,\mu}^{(\kappa)} \right\Vert_{\mathrm{op}} = 0\,,
\end{equation*}
since all bounds above are uniform in $\ell$. Therefore, as long as $T_c = O(\mu)$, the spectrum of the Birman--Schwinger operator approaches the spectrum of the operator in the second term in \Cref{bsdecomp} as $\mu \to \infty$. 

Summarizing our considerations above, we get that, since by assumption $e_\mu< 0$ for $\mu \ge \mu_0$, $T_c>0$ for all sufficiently large $\mu$. This is because the second term in the decomposition~\eqref{bsdecomp} can be made arbitrarily negative by taking $T \to 0$, whereas the first and the third term are bounded uniformly in $T \le C \mu$. Thus we get with the aid of Lemma~\ref{lem:1} that 
\begin{equation*}
- 1 = \lim\limits_{\mu \to \infty} m_\mu^{(\kappa)} (T_c) \, e_\mu\,.
\end{equation*}
In order to obtain \Cref{eq:thm1} by means of \Cref{masymp}, it remains to show that $T_c = o(\mu)$. Since it is already shown in Lemma~\ref{lem:1} that $T_c = O(\mu)$, we assume that $T_c = \Theta(\mu)$, i.e.~there exist $0<c<C$ such that $c \mu \le T_c \le C \mu$. This means that $m_{\mu}^{(\kappa)}(T_c)$ is of order $\sqrt{\mu}$, which leads to a contradiction since $\sqrt{\mu} e_\mu= o(1)$ by Lemma~\ref{lem:2} below. So, \Cref{masymp} implies \Cref{eq:thm1} as desired. 
\end{proof}
\begin{lem} \label{lem:2} Let $V \in L^{3/2}(\R^3)$ and $\alpha >6/5$. Then 
	\begin{equation*}
\limsup \limits_{\mu \to \infty} \sqrt{\mu} \sup_{\ell \in \mathbb{N}_0}\int_{\R^3}\D x \, \vert V(x)\vert \, \left\vert  j_\ell(\sqrt{\mu} \vert x \vert)\right\vert^\alpha = 0\,.
	\end{equation*}
\end{lem}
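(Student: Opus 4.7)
The plan is to combine a density argument with the uniform decay estimate on spherical Bessel functions. By Proposition~\ref{prop:2}(iii), one has a Landau-type bound $|j_\ell(r)| \le C r^{-5/6}$ uniformly in $\ell \in \N_0$ and $r > 0$, which together with the trivial estimate $|j_\ell| \le 1$ yields
\begin{equation*}
M := \sup_{\ell \in \N_0} \int_0^\infty |j_\ell(r)|^{3\alpha}\, r^2\, \D r \le \int_0^1 r^2\, \D r + C \int_1^\infty r^{2-5\alpha/2}\, \D r < \infty
\end{equation*}
precisely when $\alpha > 6/5$, which is the source of the threshold.

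Given $\delta > 0$, I would choose $V_\delta \in C_c^\infty(\R^3 \setminus \{0\})$ supported in an annulus $\{R_1 \le |x| \le R_2\}$ with $\|V-V_\delta\|_{L^{3/2}(\R^3)}<\delta$, using density of such functions in $L^{3/2}(\R^3)$. For the error part $V - V_\delta$, Hölder's inequality with exponents $3/2$ and $3$, followed by the substitution $y=\sqrt{\mu}\, x$, gives
\begin{equation*}
\sqrt{\mu}\int_{\R^3} |V(x)-V_\delta(x)|\,|j_\ell(\sqrt{\mu}|x|)|^\alpha\, \D x \le \sqrt{\mu}\,\|V-V_\delta\|_{L^{3/2}}\,\mu^{-1/2}(4\pi M)^{1/3} \le C \delta,
\end{equation*}
uniformly in $\mu$ and $\ell$. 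For the approximation part, since $|x| \ge R_1$ on the support of $V_\delta$, the Landau bound yields
\begin{equation*}
\sqrt{\mu} \int_{\R^3}|V_\delta(x)|\,|j_\ell(\sqrt{\mu}|x|)|^\alpha \, \D x \le C \|V_\delta\|_{L^1} R_1^{-5\alpha/6}\, \mu^{1/2 - 5\alpha/12},
\end{equation*}
which tends to $0$ as $\mu \to \infty$ since $1/2 - 5\alpha/12 < 0$ exactly when $\alpha > 6/5$. Combining these two estimates yields $\limsup_{\mu\to\infty}\sqrt{\mu}\sup_\ell \int_{\R^3} |V(x)|\,|j_\ell(\sqrt{\mu}|x|)|^\alpha\,\D x \le C \delta$, and letting $\delta\to 0$ concludes the proof.

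The main technical obstacle — and what in fact fixes the threshold $\alpha > 6/5$ — is the optimality of Landau's uniform-in-$\ell$ decay rate for $j_\ell$. Without this uniformity one could not extract a factor of $\mu^{-1/2}$ from the $L^3$ norm of $|j_\ell(\sqrt{\mu}|\cdot|)|^\alpha$ uniformly in $\ell$, so the error from approximating $V$ in $L^{3/2}$ would not be controllable uniformly in $\ell$, and the scheme would collapse. Once this uniform integrability is granted, the remainder is a routine scaling and density argument.
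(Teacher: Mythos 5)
Your proof is correct and takes essentially the same approach as the paper: approximate $V$ in $L^{3/2}$ by a nice compactly supported function, apply H\"older with exponents $(3/2,3)$ together with the uniform Landau decay $|j_\ell(r)|\le C r^{-5/6}$ from Proposition~\ref{prop:2}~(iii), and use scaling to extract the factor $\mu^{-1/2}$ that cancels the $\sqrt{\mu}$. The only cosmetic difference is that you take $V_\delta$ supported in an annulus away from the origin, which replaces the paper's dominated-convergence step for the smooth piece by an explicit power-law decay $\mu^{1/2-5\alpha/12}\to 0$; both variants hinge on exactly the same threshold $\alpha>6/5$.
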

\begin{proof}
We estimate
\begin{align}
\sqrt{\mu} &\sup_{\ell \in \mathbb{N}_0}\int_{\R^3}\D x \, \vert V(x)\vert \, \left\vert j_\ell(\sqrt{\mu} \vert x \vert)\right\vert^\alpha \, \le C \, \sqrt{\mu} \int_{\R^3}\D x \, \vert V(x)\vert  \frac{1}{\left(\sqrt{\mu} \vert x \vert \right)^{5\alpha/6}+1} \label{eq:limsup} \,,
\end{align}
where the inequality follows from the uniform decay of spherical Bessel functions (see Proposition~\ref{prop:2} (iii)). By using Hölder, we can further bound 
\begin{equation*}
\eqref{eq:limsup} \le C \Vert V- \phi \Vert_{L^{3/2}} \left\Vert \frac{1}{\vert \cdot \vert^{5\alpha/6} + 1} \right\Vert_{L^3} + C \, \sqrt{\mu} \int_{\R^3}\D x \, \vert \phi(x)\vert  \frac{1}{\left(\sqrt{\mu} \vert x \vert \right)^{5\alpha/6}+1}
\end{equation*}
for any $\phi \in C_0^\infty(\R^3)$. Since $\alpha>6/5$, the second term vanishes as $\mu \to \infty$ by dominated convergence using $\phi \in C_0^\infty(\R^3)$, and the first term can be made arbitrarily small as $C_0^\infty(\R^3)$ is dense in $L^{3/2}(\R^3)$. Thus, we have proven the claim. 
\end{proof}

\subsection{Proof of Proposition~\ref{prop:1}}
\begin{proof}[Proof of Proposition~\ref{prop:1}]
	We check that $V_\alpha$ satisfies the assumptions of Theorem \ref{thm:1}. First, $V_\alpha $ is radial and clearly satisfies $V_\alpha \in L^1(\R^3)$. $V_\alpha \in L^{3/2}(\R^3)$ follows since $\alpha >1/3$.  Next, we calculate the derivative of $\vert x \vert \vert V_\alpha (x) \vert $ w.r.t.~$\vert x \vert$ as
	\begin{equation*}
\left(\frac{(\log(\vert x \vert) + \alpha)^2}{\vert x \vert (\log^2(\vert x \vert) + 1)} + \frac{1-\alpha^2}{\vert x \vert (\log^2(\vert x \vert) + 1)} + 1\right) \frac{\exp(-\vert x \vert )}{\vert x\vert  (\log^2(\vert x \vert) + 1)}
	\end{equation*}
	and conclude that $\vert x \vert V_\alpha (x) $ is monotonically increasing in $\vert x \vert$, since $\alpha < 1/2$. Therefore, by using the radiality of $V_\alpha$ and the argument from Equation (4) in \cite{Fourier}, we find that $\hat{V}_\alpha \le 0$ and infer
\begin{equation*}
e_\mu = \frac{1}{2\pi^2} \int_{\R^3} V_\alpha(x) \left(\frac{\sin(\sqrt{\mu} \vert x \vert)}{\sqrt{\mu} \vert x \vert}\right)^2 \D x < 0\,
\end{equation*}
by \Cref{eq:emuangmom0}. Thus, $V_\alpha$ satisfies all conditions of Theorem \ref{thm:1}. In order to obtain a lower bound on $T_c(\mu)$ we estimate
\begin{equation*}
\vert \sqrt{\mu} e_\mu \vert \ge c \int_{0}^{\sqrt{\mu}/2} \frac{\sin(x)^2}{x^2 \vert \log(x/\sqrt{\mu})\vert^{2\alpha}}\D x \ge c\,  \frac{1}{\vert \log(\mu)\vert^{2\alpha}}
\end{equation*}
for some $c>0$ and $\mu$ large enough. Using Theorem~\ref{thm:1}, this yields
\begin{equation*}
T_c \gtrsim \mu \exp(-\log(\mu)^{2\alpha}/c) \to  \infty
\end{equation*}
as $\mu \to \infty$ since $\alpha < 1/2$. 
\end{proof}

\subsection{Proof of Theorem~\ref{thm:2}} \label{subsec:thm2}
The proof of Theorem~\ref{thm:2} is based on the following two Lemmas providing the necessary estimates for a perturbation theoretic argument yielding the next order correction to the asymptotics obtained in Theorem~\ref{thm:1}.  While Lemma~\ref{lem:3} improves the upper bounds on integrals of the interaction potential against spherical Bessel functions obtained in Lemma~\ref{lem:2} and \Cref{elbound}, in particular for $s^*> 5/3$, Lemma~\ref{lem:4} provides a lower bound on $e_\mu$ for admissible potentials. We postpone the proofs of Lemma \ref{lem:3} and Lemma \ref{lem:4} until the end of this Section. 
\begin{lem} \label{lem:3}
Let $V \in L^1(\R^3) \cap L^p(\R^3)$ for some $p \in [3/2,9/4]$ with dual $q = \frac{p}{p-1} \in [9/5,3]$. Assume that $s^* >1$, with $s^*$ as in Definition \ref{def:admpot} and set
\begin{equation} \label{eq:betabound}
\beta^*_p = \begin{cases}\frac{s^*}{2}  &\text{for} \quad s^* \in (1,5/3] \\
\min\left( \frac{(q+1)s^*-4}{3qs^*-7}+\frac{1}{2}, \, \frac{10q-11}{12q-14}\right) \quad &\text{for} \quad s^* >5/3 \,.
\end{cases}
\end{equation}
 Note that $\beta_p^*$ depends continuously on $s^*$ and is (strictly) monotonically increasing (between $1$ and $2$), and $\beta^*_p \le \min(s^*,2)/2$ for any $s^*>1$. Then for any $\delta >0$ there exists an $\epsi_0 >0$ such that for all $\epsi \in [0,\epsi_0]$ we have
\begin{equation*}
\limsup\limits_{\mu \to \infty} \mu^{\beta^*_p - \delta} \sup_{\ell \in \mathbb{N}_0}\int_{\R^3} \D x \vert V (x)\vert \, \vert j_\ell(\sqrt{\mu}\vert x \vert )\vert^{2-\varepsilon} = 0\,.
\end{equation*}
\end{lem}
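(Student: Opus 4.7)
The plan is to refine the argument of Lemma~\ref{lem:2} by exploiting the additional hypothesis $s^*>1$ (and, when $s^*$ is large, the $L^p$-integrability of $V$) to beat the rough decay rate $\mu^{-5/6}$ coming from the pointwise envelope bound $|j_\ell(t)|\le C t^{-5/6}$ in Proposition~\ref{prop:2}(iii). I would split according to which branch of $\beta_p^*$ is active.

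First, for $s^*\in(1,5/3]$ (where $\beta_p^*=s^*/2$), I would pick $s\in(1,s^*)$ and rewrite
\begin{equation*}
\int_{\R^3}|V(x)|\,|j_\ell(\sqrt{\mu}|x|)|^{2-\epsi}\,\D x = \mu^{-s/2}\int_{\R^3}\frac{|V(x)|}{|x|^s}\,(\sqrt{\mu}|x|)^{s}\,|j_\ell(\sqrt{\mu}|x|)|^{2-\epsi}\,\D x.
\end{equation*}
Since $V/|\cdot|^s\in L^1(\R^3)$ by the definition of $s^*$, it suffices to have $\sup_{\ell,t}t^{s}|j_\ell(t)|^{2-\epsi}<\infty$, which by Proposition~\ref{prop:2}(iii) holds whenever $s\le 5(2-\epsi)/6$. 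For $\epsi$ small this is automatic because $s<s^*\le 5/3$. Sending $s\uparrow s^*$ gives decay $\mu^{-s^*/2+\eta}$ with $\eta$ arbitrarily small, which absorbs the prefactor $\mu^{\beta_p^*-\delta}$.

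For the harder range $s^*>5/3$, the pointwise bound alone saturates at $\mu^{-5/6}$, so I would decompose
\begin{equation*}
|j_\ell(\sqrt{\mu}|x|)|^{2-\epsi}=|j_\ell(\sqrt{\mu}|x|)|^{\alpha}\cdot|j_\ell(\sqrt{\mu}|x|)|^{2-\epsi-\alpha}
\end{equation*}
with a parameter $\alpha\ge 0$ to be optimized, apply the pointwise bound $|j_\ell(t)|^{\alpha}\le Ct^{-5\alpha/6}$ to the first factor to extract $\mu^{-5\alpha/12}\,|x|^{-5\alpha/6}$, and then apply Hölder's inequality with the dual exponents $(p,q)$:
\begin{equation*}
\int_{\R^3}\frac{|V(x)|}{|x|^{5\alpha/6}}\,|j_\ell(\sqrt{\mu}|x|)|^{2-\epsi-\alpha}\,\D x\le \bigl\||\cdot|^{-5\alpha/6}V\bigr\|_{L^p}\bigl\|j_\ell(\sqrt{\mu}|\cdot|)^{2-\epsi-\alpha}\bigr\|_{L^q}.
\end{equation*}
The second factor rescales to $\mu^{-3/(2q)}\bigl(\int_0^{\infty}t^2|j_\ell(t)|^{q(2-\epsi-\alpha)}\D t\bigr)^{1/q}$, and the radial integral is uniformly bounded in $\ell$ as soon as $q(2-\epsi-\alpha)>18/5$, as follows from the envelope bound concentrating the mass of $j_\ell$ near $t\sim\ell$ (where $|j_\ell(t)|\sim\ell^{-5/6}$). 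The weighted $L^p$-norm of $V$ on the other hand is finite provided $5\alpha/6$ stays below a threshold determined by the singularity of $V$ at the origin, which in turn is governed by $s^*$ via the pair of hypotheses $V\in L^p$ and $V/|\cdot|^s\in L^1$ for all $s<s^*$. The two branches in the definition of $\beta_p^*$ correspond exactly to the two regimes of this constrained optimization: the first branch, $\tfrac{(q+1)s^*-4}{3qs^*-7}+\tfrac{1}{2}$, is the value of $\tfrac{5\alpha}{12}+\tfrac{3}{2q}$ when the $s^*$-dependent constraint on the weighted $L^p$-norm of $V$ is binding, while the second branch, $\tfrac{10q-11}{12q-14}$, is the value when only the $L^q$-constraint $q(2-\epsi-\alpha)>18/5$ is binding (which is why it is independent of $s^*$ and takes over for large $s^*$).

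The principal obstacle is the bookkeeping: one must track the three small parameters $\epsi$, $\delta$, and the slack in $s<s^*$ simultaneously, and solve the two-dimensional optimization in $\alpha$ subject to both constraints to recover the stated closed form for $\beta_p^*$. Once this is done, the continuity of $\beta_p^*$ in $s^*$, its (strict) monotonicity on $s^*\in(1,2)$, and the sanity check $\beta_p^*\le\min(s^*,2)/2$ reduce to elementary algebra on the two branches, and the uniformity in $\ell$ in the final bound is inherited from Proposition~\ref{prop:2}(iii) together with the $\ell$-independence of the uniform $L^q$-bound on $j_\ell$.
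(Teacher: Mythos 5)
Your treatment of the easy branch $s^*\in(1,5/3]$ matches the paper's. For $s^*>5/3$, however, your decomposition does not work under the stated hypotheses, and the quantitative claim that your constrained optimization reproduces the two branches of $\beta_p^*$ is also not correct. The pivotal step is the two-factor H\"older estimate
\begin{equation*}
\int_{\R^3}\frac{|V(x)|}{|x|^{5\alpha/6}}\,|j_\ell(\sqrt{\mu}|x|)|^{2-\epsi-\alpha}\,\D x\le \bigl\||\cdot|^{-5\alpha/6}V\bigr\|_{L^p}\bigl\|j_\ell(\sqrt{\mu}|\cdot|)^{2-\epsi-\alpha}\bigr\|_{L^q}\,,
\end{equation*}
which requires the \emph{weighted} $L^p$--norm $\||\cdot|^{-5\alpha/6}V\|_{L^p}$ to be finite. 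The hypotheses of the lemma only give $V\in L^1\cap L^p$ and $\|V/|\cdot|^s\|_{L^1}<\infty$ for $s<s^*$, and there is no interpolation between these that yields $\||\cdot|^{-a}V\|_{L^p}<\infty$ for any fixed $a>0$. Concretely, fix $p=q=2$ and let $V$ be supported on thin annuli near the origin: take $V=2^{ak}$ on a shell of width $2^{-ak}$ at radius $2^{-k}$ for some $a\in(1,2)$. One checks $V\in L^1\cap L^2$, and $\int|V|/|x|^s\,\D x\approx\sum_k 2^{k(s-2)}$, so $s^*=2>5/3$. But $\int|V|^2|x|^{-5\alpha/3}\,\D x\approx\sum_k 2^{k(5\alpha/3-(2-a))}$, which diverges unless $\alpha<\tfrac{3}{5}(2-a)$, a bound that tends to $0$ as $a\to 2^-$. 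Thus for this family your argument can only exploit $\alpha$ arbitrarily close to $0$, giving decay $\mu^{-3/(2q)-o(1)}=\mu^{-3/4-o(1)}$, which falls short of the claimed $\beta^*_p=9/10$ at $(q,s^*)=(2,2)$. You also assert that optimizing $\tfrac{5\alpha}{12}+\tfrac{3}{2q}$ over the two constraints recovers both branches of $\beta^*_p$; setting the weighted-$L^p$ threshold at $\alpha<\tfrac{6}{5}(s^*-3/q)$ (its value for power-law $V$) one finds the optimum gives $\min(s^*/2,\,5/6+1/(4q))$, not the expression in \eqref{eq:betabound}; for $(q,s^*)=(2,2)$ this would be $23/24$, which cannot be the provable rate since the counterexample above caps you well below it.

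The paper circumvents this obstruction by applying H\"older on the \emph{potential} side with three factors, writing $|V|=\bigl(|V|/|\cdot|^s\bigr)^{\alpha}\,|V|^{1-\alpha}\,|\cdot|^{\alpha s}$ and H\"older with exponents $1/\alpha$, $p/(1-\alpha)$, $q/(1-\alpha)$. This produces $\|V/|\cdot|^s\|_{L^1}^{\alpha}$ and $\|V\|_{L^p}^{1-\alpha}$, which are finite \emph{directly} from the hypotheses (no intermediate weighted $L^p$ bound needed), against a weighted $L^{q/(1-\alpha)}$ integral of $j_\ell$ controlled by Proposition~\ref{prop:2}(iv). The full decay $\mu^{-(\alpha s+3(1-\alpha)/q)/2}$ is then maximized over $\alpha$ subject only to the Bessel integrability constraint, and a short calculation shows this maximum is exactly $\beta_p^*$ as in \eqref{eq:betabound}. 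If you want to keep your $j_\ell$-splitting idea, you would have to replace the weighted $L^p$ step with something that uses $\|V/|\cdot|^s\|_{L^1}$ explicitly rather than implicitly, which essentially leads you back to the three-factor H\"older of the paper.
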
  
\noindent For admissible potentials that satisfy the $L^p(\R^{3})$-assumption in condition (d) from Definition~\ref{def:admpot}, we will use this Lemma with 
\begin{equation*} 
\beta^* = \begin{cases}\frac{s^*}{2}  &\text{for} \quad s^* \in (1,5/3] \\
\frac{4s^*-4}{9s^*-7}+\frac{1}{2} \quad &\text{for} \quad s^* \in (5/3,53/27) \\ 
\min\left( \frac{7s^*-8}{15s^*-14}+\frac{1}{2}, \, \frac{7}{8}\right) + \delta_p \quad & \text{for} \quad s^* \ge 53/27\,,
\end{cases}
\end{equation*}
for some  $\delta_p > 0$ since $p > 5/3$. For our perturbation theoretic argument to work in the general case, where we have no control on the ground state space of $\mathcal{V}_\mu$, it turns out to be necessary that
\begin{equation} \label{eq:perttheoryestimate}
4\beta^* - \frac{3 \min(s^*,2)}{2} - \frac{1}{2} > 0 \,,
\end{equation}
which is why we need the $L^p(\R^{3})$-assumption in Definition \ref{def:admpot} for $s^* \ge 53/27$.
The function $f(s^*)$ from Remark \ref{rmk:adm} can explicitly be determined by requiring that 
\begin{equation} \label{eq:fsstar}
4\beta^*_{f(s^*)} - \frac{3 \min(s^*,2)}{2} - \frac{1}{2} =  0 \,.
\end{equation}
\begin{lem} \label{lem:4}
Let $V$ be an admissible potential (cf.~Definition \ref{def:admpot}, condition (d) can be dropped).  Then for any $\delta >0$ there exists $c_\delta>0$ such that 
\begin{equation*}
\liminf\limits_{\mu \to \infty} \vert \mu^{\min(s^*+ \delta ,2)/2} \, e_\mu \vert \ge c_\delta \,.
\end{equation*} 
In particular, for admissible $V$, we have $e_\mu <0$ for $\mu$ large enough. 
\end{lem}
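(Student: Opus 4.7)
The plan is to bound $e_\mu=\inf\mathrm{spec}\,\mathcal{V}_\mu$ from above by testing with the $\ell=0$ spherical harmonic, which by~\eqref{eq:eigenvalues} gives
\begin{equation*}
e_\mu \,\le\, \frac{1}{2\pi^2\mu}\int_{\R^3} V(x)\,\frac{\sin^2(\sqrt{\mu}|x|)}{|x|^2}\,\D x\,,
\end{equation*}
and then to show that the right-hand side is sufficiently negative, distinguishing the two alternatives of admissibility condition~(b).

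If $|\cdot|^{-2}V\in L^1(\R^3)$, I would rewrite $\sin^2 t=(1-\cos 2t)/2$ and invoke the Riemann--Lebesgue lemma to conclude that the integral converges to $\tfrac12\int V(x)/|x|^2\,\D x<0$ by~(b). This yields $|\mu e_\mu|\ge c>0$ for large $\mu$, which proves the claim in this regime since $\min(s^*+\delta,2)/2=1$. If instead $|\cdot|^{-2}V\notin L^1(\R^3)$, then (b) forces $s^*=s^*_-<s^*_+$ and $s^*_-\le 2$. Splitting $V=V_+-V_-$, I would bound the repulsive part from above by $C_{\delta_1}\,\mu^{(2-s^*_++\delta_1)/2}+C_a$, using $\sin^2(\sqrt{\mu}|x|)\le\min(\mu|x|^2,1)$, the integrability of $V_+/|\cdot|^{s^*_+-\delta_1}$, and a radial split of the integration at $|x|\sim 1/\sqrt{\mu}$ and at the scale $a$ from condition~(a). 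For the attractive part I would restrict to the ball $|x|\le\pi/(2\sqrt{\mu})$, where $\sin^2(\sqrt{\mu}|x|)\ge 4\mu|x|^2/\pi^2$, obtaining
\begin{equation*}
\int V_-(x)\,\frac{\sin^2(\sqrt{\mu}|x|)}{|x|^2}\,\D x \,\ge\, \frac{4\mu}{\pi^2}\int_{B_{\pi/(2\sqrt{\mu})}}V_-(x)\,\D x \,\ge\, c_{\delta_2}\,\mu^{(2-s^*_--\delta_2)/2}\,,
\end{equation*}
where the last step is the crucial use of condition~(a). Since $s^*_-<s^*_+$, taking $\delta_1,\delta_2$ small makes the attractive term dominate, and combining gives $|e_\mu|\ge c_\delta\,\mu^{-(s^*+\delta)/2}$, from which the lemma follows (passing to a yet smaller $\delta$ if necessary to reduce to the case $s^*+\delta<2$).

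The main obstacle is to upgrade condition~(a) to the quantitative lower bound $\int_{B_\epsilon}V_-(x)\,\D x\ge c_{\delta_2}\,\epsilon^{s^*_-+\delta_2}$ valid for all sufficiently small $\epsilon$. The plan is to first establish this bound for the symmetric decreasing rearrangement of $V_-$ restricted to $B_a$, where it follows from the definition of the supremum in~(a) combined with the radial monotonicity of the rearrangement (which rules out sequential-only behavior), and then to transfer it back to $V_-$ itself via the equality of the two suprema in~(a); the necessity of this condition is highlighted by Proposition~\ref{prop:1}, where exactly this equality fails. A further subtle point is the borderline case $s^*_-=2$: the power-law argument above only yields $|e_\mu|\ge c_\delta\mu^{-1-\delta/2}$ and must be strengthened via the logarithmic improvement $\int V_-(x)\sin^2(\sqrt{\mu}|x|)/|x|^2\,\D x\gtrsim \log\mu$ -- a direct consequence of the non-integrability of $V_-/|\cdot|^2$ near the origin, as is explicit for the Yukawa potential -- which delivers $|\mu e_\mu|\gtrsim\log\mu\ge c_\delta$ as required.
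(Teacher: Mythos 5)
Your strategy tracks the paper's proof of Lemma~\ref{lem:4} closely: test with the $\ell=0$ spherical harmonic via \eqref{eq:eigenvalues}, use Riemann--Lebesgue when $|\cdot|^{-2}V\in L^1(\R^3)$, and otherwise split $V=V_+-V_-$, bounding the repulsive part above and the attractive part below by localizing to $B_{r/\sqrt\mu}$, so that everything reduces to the quantitative lower bound $\int_{B_\varepsilon}V_-\,\D x\ge c_\delta\,\varepsilon^{s^*_-+\delta}$. You correctly flag that bound as the crux. The difficulty is that the two mechanisms you propose to obtain it do not work as stated. First, the basic rearrangement inequality \cite[Thm.~3.4]{liebloss} gives $\int_{B_\varepsilon}V_-\,\D x\le\int_{B_\varepsilon}|\,V_-|_{B_a}|^*\,\D x$, i.e.\ the \emph{opposite} direction, so a lower bound for the rearrangement cannot be ``transferred back'' to $V_-$ this way; the equality of the two suprema in condition~(a) only matches asymptotic exponents, it does not convert a $\limsup$-type statement into the $\liminf$-type statement you need at \emph{every} small $\varepsilon$. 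Second, radial monotonicity of the rearrangement does \emph{not} by itself rule out ``sequential-only behaviour'': a radially decreasing profile alternating between a power-law piece $t^{-(3-s^*_-)}$ on $[a_{2k+1},a_{2k}]$ and a flat piece on $[a_{2k+2},a_{2k+1}]$ with $a_{2k+2}=a_{2k+1}^M$ and $M$ large is radially decreasing, yet $\liminf_{\varepsilon\to0}\varepsilon^{-(s^*_-+\delta)}\int_{B_\varepsilon}h\,\D x=0$ for small $\delta$ even though $r^*_-=s^*_-$. The paper's own treatment of this step is different: it introduces the auxiliary exponent $t^*_\pm(a)$ for the rearrangement (whose one-sided limits exist by monotonicity) and runs a contradiction argument to show $r^*_\pm=s^*_\pm$; your sketch does not reconstruct this and, as written, would not close the gap.

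Two smaller points. Your assertion that Proposition~\ref{prop:1} is precisely a failure of condition~(a) is incorrect: the potentials $V_\alpha$ there are radially decreasing, hence equal to their own symmetric decreasing rearrangements near the origin, so (a) holds trivially; what fails is condition~(c), namely $s^*>1$. And in the borderline case $s^*_-=2$ you claim a $\log\mu$ lower bound, which is stronger than what holds (or is needed) in general; the paper only shows $\mu\,v_\mu\to-\infty$ without a rate, which already gives $\mu^{\min(s^*+\delta,2)/2}|e_\mu|=\mu|e_\mu|\to\infty$.
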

\noindent The proof of  Theorem~\ref{thm:2} is divided in two parts. In the first part, Lemma \ref{lem:5}, we provide an asymptotic formula for $T_c$ with parameter $\kappa >0$. In the second part, Lemma \ref{lem:6}, we asymptotically compare $1/(\sqrt{\mu} b_\mu^{(\kappa)})$ with $1/(\sqrt{\mu} b_\mu)$. By combining these formulas, we obtain Theorem \ref{thm:2}. 
\begin{lem} \label{lem:5}
Let $V$ be an admissible potential and fix $\kappa > 0$. Then the critical temperature $T_c$ is positive and satisfies
\begin{equation} \label{eq:thm2kappa}
\lim\limits_{\mu \to \infty} \left(\log\frac{\mu}{T_c} + \frac{\pi}{2\sqrt{\mu} b_\mu^{(\kappa)}}\right) = 2-\gamma - \kappa \, \frac{\pi}{2} - \log(8/\pi)\,.
\end{equation} 
\end{lem}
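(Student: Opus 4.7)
The strategy parallels the treatment of the weak-coupling limit in \cite{hs081}, adapted to the high-density regime. By the Birman--Schwinger principle, $T_c$ is characterized by the condition that $-1$ be the smallest eigenvalue of the compact operator $B_{T_c,\mu}$, and we exploit the decomposition \eqref{bsdecomp} to reduce this spectral problem to an effective eigenvalue equation on $L^2(\Sph^2)$. The presence of $\kappa>0$ is essential, since it ensures that the first summand of \eqref{bsdecomp} is of small Hilbert--Schmidt norm (by the argument already used in the proof of Theorem~\ref{thm:1}) and can therefore be treated perturbatively, together with $A^{(\kappa)}_{T,\mu}$.

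The first step is to identify the effective operator on the sphere. Since $V^{1/2}\mathfrak{F}_\mu^\dagger\mathfrak{F}_\mu\vert V\vert^{1/2}$ is isospectral to $\mathcal{V}_\mu$, the dominant (second) summand of \eqref{bsdecomp} contributes $m_\mu^{(\kappa)}(T)\mathcal{V}_\mu$ on the sphere. To evaluate the remaining two summands, I would use trial vectors of the form $\psi = V^{1/2}\mathfrak{F}_\mu^\dagger u$ for $u\in L^2(\Sph^2)$. A direct computation of $\langle\psi\vert V^{1/2}(p^2+\kappa^2\mu)^{-1}\vert V\vert^{1/2}\vert\psi\rangle$ in momentum space, followed by the rescaling $p\to\sqrt{\mu}\,p$, expresses this quadratic form as a radial integral of $\vert p\vert^2/(\vert p\vert^2+\kappa^2)$ against $\int_{\Sph^2}\vert\hat\varphi(\sqrt{\mu}\,p)\vert^2\D\omega(p)$, with $\hat\varphi$ as in the paragraph following \eqref{eq:defWmu}. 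The analogous computation for $\langle\psi\vert A^{(\kappa)}_{T,\mu}\vert\psi\rangle$, carried out exactly as in \eqref{eq:firstline}--\eqref{eq:thirdline} (using radiality to convert the angular position-space integrals into products of spherical Bessel functions), produces the ``resonant'' contribution concentrated near $\vert p\vert=1$. Adding the two contributions reconstructs, up to sign, the full quadratic form of $\mathcal{W}_\mu^{(\kappa)}$ defined in \eqref{eq:defWmu}. Hence, on the ground state subspace of $\mathcal{B}_\mu^{(\kappa)}$, the Birman--Schwinger condition at $T=T_c$ takes the form
\begin{equation*}
\mathrm{inf\,spec}\bigl(m_\mu^{(\kappa)}(T_c)\,\mathcal{V}_\mu - \mathcal{W}_\mu^{(\kappa)}\bigr) \; = \; -1 + o\!\bigl(\sqrt{\mu}\,\vert b_\mu^{(\kappa)}\vert\bigr)\,.
\end{equation*}

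Using the identity $\mathcal{B}_\mu^{(\kappa)}=(\pi/2)(\mathcal{V}_\mu-\mathcal{W}_\mu^{(\kappa)})$, and noting that $\langle u,\mathcal{W}_\mu^{(\kappa)} u\rangle$ is of smaller order than $\vert e_\mu\vert$ for ground states $u$ of $\mathcal{V}_\mu$ (which follows by combining Lemma~\ref{lem:3} with Lemma~\ref{lem:4}), first-order perturbation theory reduces the above condition to
\begin{equation*}
m_\mu^{(\kappa)}(T_c)\cdot\frac{2}{\pi}\,b_\mu^{(\kappa)} \; = \; -1 + o\!\bigl(\sqrt{\mu}\,\vert b_\mu^{(\kappa)}\vert\bigr) \qquad (\mu\to\infty)\,.
\end{equation*}
Solving for $m_\mu^{(\kappa)}(T_c)$, substituting the asymptotic formula \eqref{masymp}, and dividing by $\sqrt{\mu}$ yields exactly \eqref{eq:thm2kappa}.

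The main obstacle is producing the error estimates at the stated accuracy $o(\sqrt{\mu}\vert b_\mu^{(\kappa)}\vert)$ \emph{uniformly} over the (approximate) ground state subspace of $\mathcal{B}_\mu^{(\kappa)}$. Since $\sqrt{\mu}\vert b_\mu^{(\kappa)}\vert$ decays at the rate $\mu^{(1-\min(s^*,2))/2}$ by \eqref{elbound}, the required accuracy is delicate. Matching it is where the refined Bessel-integral bound of Lemma~\ref{lem:3} and the matching lower bound $\vert e_\mu\vert\gtrsim\mu^{-\min(s^*+\delta,2)/2}$ of Lemma~\ref{lem:4} enter; their interplay is captured by the inequality \eqref{eq:perttheoryestimate}, which holds automatically for $s^*<53/27$ but requires the additional $L^p$-regularity imposed in condition~(d) of Definition~\ref{def:admpot} when $s^*\ge 53/27$. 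In the alternative setting where the ground state subspace of $\mathcal{V}_\mu$ is \emph{a priori} known to be spanned by spherical harmonics of bounded angular momentum, the perturbative step can be restricted to this finite-dimensional subspace, and condition~(d) becomes unnecessary.
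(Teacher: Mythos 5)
Your high-level strategy is the same as the paper's: decompose $B_{T,\mu}$ as in \eqref{bsdecomp}, identify the second summand as giving $\mathcal{V}_\mu$ and the residual as giving $\mathcal{W}_\mu^{(\kappa)}$ after the $T\to 0$ limit, apply first-order perturbation theory, and close the argument with Lemmas~\ref{lem:3} and \ref{lem:4} via the inequality \eqref{eq:perttheoryestimate}. You correctly flag where condition (d) of Definition~\ref{def:admpot} enters and why the finite-$\mathcal{L}$ alternative lets you drop it. However, there is a genuine gap in the central reduction step.

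The gap is in passing from ``$B_{T_c,\mu}$ has smallest eigenvalue $-1$'' to the effective eigenvalue equation on $L^2(\Sph^2)$. You propose to do this by evaluating quadratic forms $\langle\psi\vert\cdot\vert\psi\rangle$ on trial vectors $\psi = V^{1/2}\mathfrak{F}_\mu^\dagger u$. This only gives a one-sided variational bound on $\mathrm{inf\,spec}\,B_{T,\mu}$; it does not establish that the smallest eigenvalue is actually achieved (up to controllable error) on this subspace, which is what the equality in your effective condition requires. Worse, $B_{T,\mu}$ is not self-adjoint when $V$ changes sign (since $V^{1/2}\neq\vert V\vert^{1/2}$), so $\langle\psi\vert B_{T,\mu}\vert\psi\rangle$ is not the quadratic form of a self-adjoint operator and does not variationally control the spectrum in the first place. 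The paper circumvents both issues by the algebraic factorization
\begin{equation*}
1+B_{T,\mu} \;=\; \Bigl(1+V^{1/2}M_{T,\mu}^{(\kappa)}\vert V\vert^{1/2}\Bigr)\Bigl(1+\tfrac{m_\mu^{(\kappa)}(T)}{1+V^{1/2}M_{T,\mu}^{(\kappa)}\vert V\vert^{1/2}}\,V^{1/2}\mathfrak{F}_\mu^\dagger\mathfrak{F}_\mu\vert V\vert^{1/2}\Bigr),
\end{equation*}
where $V^{1/2}M_{T,\mu}^{(\kappa)}\vert V\vert^{1/2}:=L_\mu^{(\kappa)}+A_{T,\mu}^{(\kappa)}$. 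Since the first factor is shown to be invertible (via \eqref{eq:case12}), this converts the condition ``$-1\in\mathrm{spec}\,B_{T,\mu}$'' into an \emph{exact} eigenvalue condition for the operator \eqref{eq:bsmin2} on $L^2(\Sph^2)$, which is self-adjoint and isospectral to the combination you want. The geometric-series expansion of $(1+V^{1/2}M^{(\kappa)}_{T,\mu}\vert V\vert^{1/2})^{-1}$ then produces the $\mathcal{W}_\mu^{(\kappa)}$ correction with quantified error. Without this factorization (or an equivalent two-sided argument), your reduction step is heuristic. A secondary, smaller issue: your displayed effective condition reads $\mathrm{inf\,spec}(m_\mu^{(\kappa)}(T_c)\mathcal{V}_\mu-\mathcal{W}_\mu^{(\kappa)})=-1+o(\cdot)$, but the factor $m_\mu^{(\kappa)}(T_c)$ must multiply the full operator $\mathcal{V}_\mu-\mathcal{W}_\mu^{(\kappa)}$ (as your subsequent line $m_\mu^{(\kappa)}(T_c)\cdot\frac{2}{\pi}b_\mu^{(\kappa)}=-1+o(\cdot)$ correctly presumes); since $m_\mu^{(\kappa)}(T_c)\sim 1/\vert e_\mu\vert$ is divergent, the two versions are not interchangeable at the accuracy you need.
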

\begin{lem} \label{lem:6} Let $V$ be an admissible potential and $\kappa > 0$. Then
	\begin{equation} \label{eq:kappa0}
\lim\limits_{\mu \to \infty} \left(  \frac{\pi}{2\sqrt{\mu} b_\mu} - \frac{\pi}{2\sqrt{\mu} b_\mu^{(\kappa)}}\right) =  \kappa \, \frac{\pi}{2} \,.
	\end{equation}
\end{lem}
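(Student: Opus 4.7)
The plan is to exploit that the operator difference $\mathcal{B}_\mu^{(\kappa)} - \mathcal{B}_\mu$ admits an exceptionally simple closed form. First I would compute from the defining quadratic form \eqref{eq:defWmu} that only the second summand depends on $\kappa$, and that its angular integral $\int_{\Sph^2} |\hat{\varphi}(\sqrt{\mu}\,p/|p|)|^2\, \D\omega(p)$ is $|p|$-independent and equals $\|\mathcal{V}_\mu u\|^2_{L^2(\Sph^2)} = \langle u | \mathcal{V}_\mu^2 | u\rangle$ by self-adjointness of $\mathcal{V}_\mu$. Combined with the elementary identity $\int_0^\infty\!\left(\tfrac{|p|^2}{|p|^2+\kappa^2} - 1\right)\D|p| = -\kappa\pi/2$, this yields the quadratic-form identity
\begin{equation*}
\mathcal{W}_\mu^{(\kappa)} - \mathcal{W}_\mu \ = \ -\frac{\pi \kappa \sqrt{\mu}}{2}\, \mathcal{V}_\mu^2\,, \qquad \text{so that} \qquad \mathcal{B}_\mu^{(\kappa)} - \mathcal{B}_\mu \ = \ \frac{\pi^2 \kappa \sqrt{\mu}}{4}\, \mathcal{V}_\mu^2 \ \ge \ 0\,.
\end{equation*}

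Let $\psi_0, \psi_\kappa \in L^2(\Sph^2)$ denote normalized ground states of $\mathcal{B}_\mu$ and $\mathcal{B}_\mu^{(\kappa)}$, respectively. Using $\psi_0$ as a trial state in the min--max characterization of $b_\mu^{(\kappa)}$, and $\psi_\kappa$ in that of $b_\mu$, I obtain the sandwich
\begin{equation*}
\frac{\pi^2 \kappa \sqrt{\mu}}{4}\, \|\mathcal{V}_\mu \psi_\kappa\|^2 \ \le \ b_\mu^{(\kappa)} - b_\mu \ \le \ \frac{\pi^2 \kappa \sqrt{\mu}}{4}\, \|\mathcal{V}_\mu \psi_0\|^2\,.
\end{equation*}
Next, rewriting the eigenvalue equation $\mathcal{B}_\mu\psi_0 = b_\mu\psi_0$ as $\mathcal{V}_\mu \psi_0 = (2b_\mu/\pi)\, \psi_0 + \mathcal{W}_\mu \psi_0$ (and analogously for $\psi_\kappa$) and expanding the norm gives
\begin{equation*}
\|\mathcal{V}_\mu \psi_0\|^2 \ = \ (2 b_\mu/\pi)^2 + \tfrac{4 b_\mu}{\pi}\langle\psi_0|\mathcal{W}_\mu|\psi_0\rangle + \|\mathcal{W}_\mu \psi_0\|^2 \ = \ (2b_\mu/\pi)^2 + O\!\left(|b_\mu|\, \|\mathcal{W}_\mu\|_{\mathrm{op}} + \|\mathcal{W}_\mu\|_{\mathrm{op}}^2\right)\,,
\end{equation*}
and the analogue for $\psi_\kappa$ with $\mathcal{W}_\mu^{(\kappa)}$.

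The main obstacle is to establish that $\|\mathcal{W}_\mu\|_{\mathrm{op}}, \|\mathcal{W}_\mu^{(\kappa)}\|_{\mathrm{op}} = o(|e_\mu|)$ as $\mu \to \infty$ for admissible $V$. I would do this by bounding $\mathcal{W}_\mu^{(\kappa)}$ in terms of radial integrals of $|V|$ against powers of spherical Bessel functions---precisely the quantities controlled by Lemma~\ref{lem:3}, which yield decay of order $\mu^{-\beta^*}$---and combining this with the lower bound $|e_\mu| \gtrsim \mu^{-\min(s^*,2)/2}$ from Lemma~\ref{lem:4}. This is exactly where condition~(d) of Definition~\ref{def:admpot} is needed, via the inequality \eqref{eq:perttheoryestimate}. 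Once the smallness of $\mathcal{W}_\mu^{(\kappa)}$ relative to $e_\mu$ is available, the min--max principle immediately gives $b_\mu, b_\mu^{(\kappa)} = \tfrac{\pi}{2} e_\mu (1 + o(1))$ (both negative for large $\mu$), and the expansion above yields $\|\mathcal{V}_\mu \psi\|^2 = e_\mu^2(1+o(1))$ for $\psi \in \{\psi_0, \psi_\kappa\}$, since the two error terms are $o(e_\mu^2)$. Much of this groundwork overlaps with the proof of Lemma~\ref{lem:5}, from which I would import the relevant norm estimates on $\mathcal{W}_\mu^{(\kappa)}$.

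Finally, plugging everything back into the sandwich gives $b_\mu^{(\kappa)} - b_\mu = \frac{\pi^2 \kappa \sqrt{\mu}}{4}\, e_\mu^2\, (1+o(1))$, and together with $b_\mu\, b_\mu^{(\kappa)} = \tfrac{\pi^2}{4} e_\mu^2 (1+o(1))$ a one-line computation
\begin{equation*}
\frac{\pi}{2\sqrt{\mu}\, b_\mu} - \frac{\pi}{2\sqrt{\mu}\, b_\mu^{(\kappa)}} \ = \ \frac{\pi\,(b_\mu^{(\kappa)} - b_\mu)}{2 \sqrt{\mu}\, b_\mu\, b_\mu^{(\kappa)}} \ = \ \frac{\tfrac{\pi^3 \kappa \sqrt{\mu}}{4}\, e_\mu^2\,(1+o(1))}{2\sqrt{\mu}\cdot \tfrac{\pi^2}{4}\, e_\mu^2\,(1+o(1))} \ \longrightarrow \ \frac{\pi \kappa}{2}\,
\end{equation*}
yields the claim \eqref{eq:kappa0}.
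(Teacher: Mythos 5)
Your proposal is correct, and it takes a genuinely different route from the paper's proof. Both arguments hinge on the same key observation, which the paper records as \eqref{eq:differencekappa0} and you record as the operator identity $\mathcal{B}_\mu^{(\kappa)} - \mathcal{B}_\mu = \frac{\pi^2\kappa\sqrt{\mu}}{4}\mathcal{V}_\mu^2 \ge 0$; but from there the paths diverge. The paper runs first-order perturbation theory on $\mathcal{V}_\mu$ perturbed by $\mathcal{W}_\mu$ resp.\ $\mathcal{W}_\mu^{(\kappa)}$, obtaining expansions for $b_\mu$ and $b_\mu^{(\kappa)}$ around $\frac{\pi}{2}e_\mu$ and carefully arguing (via \eqref{eq:differencekappa0}) that the same ground state $u$ of $\mathcal{V}_\mu$ may be used in both expansions despite possible degeneracy. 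You instead work directly with the exact ground states $\psi_0,\psi_\kappa$ of $\mathcal{B}_\mu,\mathcal{B}_\mu^{(\kappa)}$, obtain a variational sandwich for $b_\mu^{(\kappa)} - b_\mu$ from the positivity of the difference operator, and then use the eigenvalue equation to turn $\|\mathcal{V}_\mu\psi\|^2$ into $(2b/\pi)^2 + O(|b|\,\|\mathcal{W}\| + \|\mathcal{W}\|^2)$. This buys you two things: you sidestep entirely the degeneracy/ground-state-matching discussion the paper must give around \eqref{eq:comparison}, and, because there is no second-order perturbation remainder (no spectral gap of $\mathcal{V}_\mu$ ever enters), your argument only needs $\|\mathcal{W}_\mu\|_{\mathrm{op}},\|\mathcal{W}_\mu^{(\kappa)}\|_{\mathrm{op}} = o(|e_\mu|)$, i.e.\ $2\beta^* - \tfrac12 > \min(s^*,2)/2$, which is strictly weaker than the paper's \eqref{eq:perttheoryestimate}. (You attribute your requirement to \eqref{eq:perttheoryestimate}, but in fact your route gets away with less; this is a small bonus you did not claim.) The needed norm bound on $\mathcal{W}_\mu^{(\kappa)}$ is \eqref{eq:Wmubound} from the proof of Lemma~\ref{lem:5}, as you say; the bound on $\mathcal{W}_\mu$ itself then follows from your own identity together with $\|\mathcal{V}_\mu\|_{\mathrm{op}} \lesssim \mu^{-\beta^*+\delta}$ from \eqref{eq:case11}, exactly as the paper does in \eqref{eq:Wmubound2}.

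One thing you should still address: Definition~\ref{def:admpot} allows condition~(d) to be replaced by control of the ground state space of $\mathcal{V}_\mu$ inside a finite-dimensional $\mathcal{I}_\mathcal{L}$, and the paper handles that alternative separately (using $\beta^*_{3/2}$ and the restricted bounds \eqref{eq:specialcase}, \eqref{eq:Wmubound2special}). Your variational argument works with the ground states $\psi_0,\psi_\kappa$ of $\mathcal{B}_\mu,\mathcal{B}_\mu^{(\kappa)}$, which a priori need not lie in $\mathcal{I}_\mathcal{L}$, so the restricted norm bounds do not apply to them directly. You would need to check that the global bound $\|\mathcal{W}_\mu\|_{\mathrm{op}} \lesssim \mu^{-2\beta^*_{3/2}+1/2+\delta}$ (available without condition~(d)) is still $o(|e_\mu|)$ in that regime; a short computation shows it is, since $2\beta^*_{3/2} - 1/2 > \min(s^*,2)/2$ for $s^*>1$, but this step should be made explicit rather than left to the reader.
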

\begin{proof}[Proof of Theorem \ref{thm:2}]
By combining Lemma \ref{lem:5} with Lemma \ref{lem:6} we obtain
\begin{align*}
\lim\limits_{\mu \to \infty} \left(\log\frac{\mu}{T_c} + \frac{\pi}{2\sqrt{\mu} b_\mu}\right) &= \lim\limits_{\mu \to \infty} \left(\log\frac{\mu}{T_c} + \frac{\pi}{2\sqrt{\mu} b_\mu^{(\kappa)}}\right) + \lim\limits_{\mu \to \infty} \left(  \frac{\pi}{2\sqrt{\mu} b_\mu} - \frac{\pi}{2\sqrt{\mu} b_\mu^{(\kappa)}}\right) \\[1.5mm]
&= 2-\gamma - \kappa \, \frac{\pi}{2} - \log(8/\pi) + \kappa \, \frac{\pi}{2} = 2-\gamma  - \log(8/\pi)\,. \qedhere
\end{align*} 
\end{proof}
\noindent The rest of this Section is devoted to the proofs of the four Lemmas above. We begin with Lemma \ref{lem:5} and  Lemma \ref{lem:6}. 
\begin{proof}[Proof of Lemma~\ref{lem:5}]
Fix $\kappa > 0$. We first assume condition (d) from Definition \ref{def:admpot} and discuss the changes for the special case afterwards. Similarly to the proof of Theorem~\ref{thm:1}, we show that the first and the third term in the decomposition \eqref{bsdecomp} vanish in operator norm. 

For the first term, we need to improve the estimate from Theorem~\ref{thm:1}, where we employed the easily accessible Hilbert--Schmidt norm as an upper bound to the operator norm. Indeed, using the radial symmetry of $V$, similarly to the bound of $A_{T,\mu}^{(\kappa)}$ in \Cref{eq:omubound},
the operator norm of the compact operator $L_\mu^{(\kappa)} := V^{1/2} (p^2 + \kappa ^2 \mu)^{-1} \vert V\vert^{1/2}$
can be estimated as
\begin{equation} \label{eq:Lbound}
\left\Vert  L_{\mu}^{(\kappa)}\right\Vert_{\mathrm{op}} \ \le C \, \mu^{1/2} \,  \int_{0}^{\infty} \D p \frac{p^2}{p^2 + \kappa^2} \sup_{\ell \in \mathbb{N}_0} \int_{\R^3} \D x \vert V(x)\vert \, \vert j_\ell(\sqrt{\mu} p \vert x \vert ) \vert^2\,,
\end{equation}
which is bounded by $\mu^{-\beta^* + 1/2 + \delta}$ for any $\delta > 0$ by means of Lemma~\ref{lem:3} (note that the $p$--integral is finite since $s^* >1$). Recall from the prove of Theorem \ref{thm:1} (in particular Equation~\ref{eq:omubound}) that 
\begin{align} \label{eq:Abound}
\left\Vert  A_{T,\mu}^{(\kappa)}\right\Vert_{\mathrm{op}} \ \le \  &C \,     \int_{0}^{\infty} \D p \, \left\vert \frac{1}{K_{T/\mu,1}(p)}- \frac{1}{p^2 + \kappa^2 }\right\vert \vert p-1\vert^{\epsi} \left(\frac{1}{p^\epsi}+ 1\right)p^2\\
&\nonumber\times \,\mu^{1/2}  \sup_{\ell \in \N_0}\int_{\R^3} \D x \,  \vert V (x)\vert \left(\vert j_\ell(p\sqrt{\mu}\vert x \vert) \vert^{2-22\epsi/5} + \vert j_\ell(\sqrt{\mu}\vert x \vert) \vert^{2-22\epsi/5}\right)  \,. 
\end{align}
for any $\epsi \in (0,5/11)$. Again by Lemma~\ref{lem:3} we may bound the $x$--integral by $\mu^{-\beta^* + \delta}(1 + p^{-\beta^* + \delta})$ for any $\delta > 0$ and the $p$--integral is finite as long as $0<T \le C \mu$.
We now define, analogously to Equation (28) in \cite{hs081}, 
\begin{equation*}
V^{1/2}M_{T,\mu}^{(\kappa)} \vert V \vert^{1/2} := V^{1/2}K_{T,\mu}^{-1}\vert V \vert^{1/2} - m_\mu^{(\kappa)}(T) V^{1/2}\mathfrak{F}^\dagger_\mu \mathfrak{F}_\mu \vert V \vert^{1/2} = L_\mu^{(\kappa)} + A_{T,\mu}^{(\kappa)}
\end{equation*}
and combine the bounds \eqref{eq:Lbound} and \eqref{eq:Abound} from above to obtain 
\begin{equation} \label{eq:case12}
\limsup\limits_{\mu \to \infty} \mu^{\beta^*-1/2-\delta}\sup_{0<T\le C \mu} \left\Vert V^{1/2} M_{T,\mu}^{(\kappa)} \vert V \vert^{1/2} \right\Vert_{\mathrm{op}} =  0
\end{equation}
for any $\delta >0 $. Also, since
$V^{1/2} \mathfrak{F}_\mu^\dagger\mathfrak{F}_\mu \vert V \vert^{1/2}$ 
is isospectral to $\mathcal{V}_\mu$, so its eigenvalues are given by \Cref{eq:eigenvalues}, one can easily see, using Lemma \ref{lem:3} again, that
\begin{equation} \label{eq:case11}
\limsup\limits_{\mu \to \infty} \mu^{\beta^*-\delta} \left\Vert V^{1/2} \mathfrak{F}_\mu^\dagger
\mathfrak{F}_\mu \vert V \vert^{1/2}\right\Vert_{\mathrm{op}} = 0
\end{equation}
for any $\delta >0$. In particular, since $s^*>1$, the bound \eqref{eq:case12} implies that $1+V^{1/2} M_{T,\mu}^{(\kappa)} \vert V \vert^{1/2}$ is invertible for any $0<T \le C \mu$ and $\mu$ large enough. 

We can thus, following the argument around Equation (30) in~\cite{hs081}, conclude that 
the Birman-Schwinger operator $B_{T,\mu}$ having an eigenvalue $-1$ is equivalent to 
 the selfadjoint operator 
 \begin{equation} \label{eq:bsmin2}
\mathfrak{F}_\mu \vert V \vert^{1/2}\frac{m_\mu^{(\kappa)}(T)}{1+ V^{1/2} M_{T,\mu}^{(\kappa)}\vert V \vert^{1/2} } V^{1/2} \mathfrak{F}_\mu^\dagger\,.
 \end{equation}
 acting on $L^2(\Sph^2)$ having an eigenvalue $-1$.
At $T= T_c$, $-1$ is the smallest eigenvalue of $B_{T,\mu}$, hence \eqref{eq:bsmin2} has an eigenvalue $-1$ for this value of $T$. By continuity and monotonicity of $m_\mu^{(\kappa)}(T)$ one can in fact show that $-1$ is the smallest eigenvalue of \eqref{eq:bsmin2} in this case (cf.~the discussion below Equation~(31) in \cite{hs081}). 

Since $\mathfrak{F}_\mu V \mathfrak{F}_\mu^\dagger = \mathcal{V}_{\mu}$ (see \Cref{eq:vmu}) and $e_\mu = \mathrm{inf\, spec}\, \mathcal{V}_\mu < 0$ by Lemma~\ref{lem:4}, it immediately follows  that 
 \begin{equation*}
-1 = \lim\limits_{\mu\to \infty} \mathrm{inf \, spec} \, \mathcal{V}_\mu \,  m_\mu^{(\kappa)}(T_c) = \lim\limits_{\mu \to \infty} e_\mu m_\mu^{(\kappa)}(T_c)\,,
 \end{equation*}
 which, in combination with the asymptotics \eqref{masymp} and the argument for $T_c = o(\mu)$ from the proof of Theorem~\ref{thm:1}, reproves \eqref{eq:thm1} resp. \eqref{eq:leadingorder}, i.e.~the asymptotic formula for $T_c$ to leading order. To obtain the next order, we employ first order perturbation theory as in the proof of Theorem~1 in~\cite{hs081} (in particular, see Equation (32)).

Indeed, using \Cref{eq:case12} and \Cref{eq:case11}, we can expand the geometric series in \Cref{eq:bsmin2} to first order and employ first order perturbation theory to arrive at 
 \begin{equation} \label{eq:impleq1}
\frac{1}{\sqrt{\mu}} m_\mu^{(\kappa)}(T_c) = \frac{-1}{\sqrt{\mu} e_\mu - \sqrt{\mu} \big\langle u \big\vert \mathfrak{F}_\mu V M_{T_c,\mu}^{(\kappa)} V \mathfrak{F}_\mu^\dagger \big\vert u \big\rangle + O(\mu^{-4\beta^*+\min(s^*,2)/2 + 3/2+\delta})}\,,
\end{equation}
for any $\delta >0$. Here, $u$ is the normalized eigenfunction corresponding to the lowest negative eigenvalue $e_\mu$ of the compact operator $\mathcal{V}_\mu = \mathfrak{F}_\mu V \mathfrak{F}_\mu^\dagger$ (see Lemma \ref{lem:4}). In case of (finite!) degeneracy, one has to choose the ground state $u$ of $\mathcal{V}_\mu$ that minimizes the second term in the denominator of \eqref{eq:impleq1}.
The error term in \Cref{eq:impleq1} is twofold. 
The first part comes from expanding the geometric series. The second part comes from first order perturbation theory, where we made use of the fact that 
\begin{equation} \label{eq:secondterm}
\vert \sqrt{\mu }e_\mu\vert \ge c_\delta \,  \mu^{-\min(s^*+\delta,2)/2+1/2} \quad \text{and} \quad   \sup_{0<T\le C \mu}\sqrt{\mu} \big\Vert \mathfrak{F}_\mu V M_{T,\mu}^{(\kappa)} V \mathfrak{F}_\mu^\dagger \big\Vert_{\mathrm{op}} \le C_\delta \mu^{-2\beta^*+1 + \delta}
\end{equation}
for any $\delta >0$ by Lemma~\ref{lem:4} resp.~\Cref{eq:case11} and \Cref{eq:case12} (recall $T_c = o(\mu)$ from above).
The error from the series expansion is of order $O(\mu^{-3\beta^* + 3/2+\delta})$ and the error from the perturbation argument is of order $O(\mu^{-4\beta^*+\min(s^*,2)/2 + 3/2+\delta})$
and hence dominates, since $\beta^* \leq \min(s^*,2)/2$.
 
 \Cref{eq:impleq1} is an implicit equation for $T_c$. By the second estimate in \Cref{eq:secondterm} 
 and since $T_c \to 0 $ as $\mu \to \infty$, we need to evaluate the limit of $\langle u\vert  \mathfrak{F}_\mu V M_{T,\mu}^{(\kappa)} V \mathfrak{F}_\mu^\dagger \vert  u\rangle$ as $T \to 0$. By the same arguments as used in Equation~(35) in \cite{hs081} (dominated convergence and Lipschitz continuity of the angular integrals), this yields
 \begin{equation*} 
\lim\limits_{T \to 0}\big\langle u \big\vert \mathfrak{F}_\mu V M_{T,\mu}^{(\kappa)} V \mathfrak{F}_\mu^\dagger \big\vert u \big\rangle = \big\langle u \big\vert\mathcal{W}_\mu^{(\kappa)}  \big\vert u \big\rangle  \,,
 \end{equation*}
 uniformly in $u \in L^2(\Sph^2)$ with $\Vert u \Vert_{L^2(\Sph^2)} = 1$, where $\mathcal{W}_\mu^{(\kappa)}$ was defined in \eqref{eq:defWmu}. Using that $T_c$ is exponentially small (in some positive power of $\mu$) as $\mu \to \infty$ by application of Theorem~\ref{thm:1} in combination with \Cref{elbound}, we obtain
 \begin{equation} \label{eq:Tconv}
\big\vert \big\langle u \big\vert \mathfrak{F}_\mu V M_{T_c,\mu}^{(\kappa)} V \mathfrak{F}_\mu^\dagger \big\vert u \big\rangle - \big\langle u \big\vert\mathcal{W}_\mu^{(\kappa)}  \big\vert u \big\rangle\big\vert \le C_D \mu^{-D}
 \end{equation}
 for any $D>0$, uniformly in $u \in L^2(\Sph^2)$ with $\Vert u \Vert_{L^2(\Sph^2)} = 1$. Combining the second estimate in \Cref{eq:secondterm} with \Cref{eq:Tconv} we thus get
 \begin{equation} \label{eq:Wmubound}
\left\Vert \mathcal{W}_\mu^{(\kappa)} \right\Vert_{\mathrm{op}} \le C_\delta \mu^{-2\beta^*+1/2 + \delta}
 \end{equation}
 for any $\delta > 0$. Since $\vert \sqrt{\mu} e_\mu \vert \ge c_\delta \,  \mu^{-\min(s^*+\delta,2)+1/2}$, we have that whenever $e_\mu <0$ also $b_\mu^{(\kappa)}<0$ for large enough $\mu$ (recall \Cref{eq:Bmu} and \Cref{eq:bmu}). In particular, combining Equations \eqref{eq:impleq1}, \eqref{eq:Tconv} and \eqref{eq:Wmubound}, we have shown that
 \begin{equation*}
\frac{1}{\sqrt{\mu}} m_\mu^{(\kappa)}(T_c) + \frac{\pi}{2 \sqrt{\mu} b_\mu^{(\kappa)}} = O(\mu^{-4\beta^* + 3 \min(s^*,2)/2+1/2+\delta})\,,
 \end{equation*}
for any $\delta > 0$. Since $4\beta^* - 3 \min(s^*,2)/2-1/2 > 0$ (see \Cref{eq:perttheoryestimate}), we conclude  \Cref{eq:thm2kappa} by employing the asymptotics \eqref{masymp}. 

In case that there exists $\mu_0 >0$ and $\mathcal{L} \subset \mathbb{N}_0$ with $\vert \mathcal{L} \vert < \infty$, such that for all $\mu \ge \mu_0$, the ground state space of $\mathcal{V}_\mu$ is contained in the finite--dimensional subspace 
\begin{equation*}
\mathcal{I}_\mathcal{L} := \mathrm{span} \left\{ Y_\ell^m : \ell \in \mathcal{L}, \, |m| \le \ell  \right\} 
\end{equation*} 
of $L^2(\Sph^2)$, spanned by the spherical harmonics with angular momentum $\ell \in \mathcal{L}$, we can drop condition (d) from Definition \ref{def:admpot}. In order to see this, take $Y_\ell^m$ with $\ell \in \mathcal{L}$ and $\vert m \vert \le \ell$ and estimate 
\begin{align*}
\left\Vert|V|^{1/2} \mathfrak{F}_\mu^\dagger Y_\ell^m\right\Vert_{L^2}^2  = C \int_{\R^3} |V(x)|  &\left\vert \int_{\Sph^2} e^{i\sqrt{\mu}p\cdot x} Y_\ell^m(p)\D \omega(p) \right\vert^2 \D x \\[1mm]
&= C \int_{\R^3} |V(x)|(j_\ell(\sqrt{\mu} |x|))^2 \D x \, \le \, C_{\ell, \delta} \, \mu^{-\min(s^*,2)+ \delta}\,,
\end{align*}
for any $\delta >0$. The second equality follows by the radiality of $V$ and the final estimate by the decay of spherical Bessel functions (see the first bound in Proposition~\ref{prop:2}~(iii)). Using \Cref{eq:case12} with $\beta^*_{3/2}$ instead of $\beta^*$ by means of Lemma \ref{lem:3}, this implies that 
 \begin{equation} \label{eq:specialcase}
\sup_{u \in \mathcal{I}_\mathcal{L}\,, \, \Vert u \Vert_{L^2} = 1}\big\vert \sqrt{\mu}\, \big\langle u \big\vert \mathfrak{F}_\mu V M_{T,\mu}^{(\kappa)} V \mathfrak{F}_\mu^\dagger \big\vert u \big\rangle\big\vert \le C_{\mathcal{L}, \delta} \, \mu^{-\beta^*_{3/2}-\min(s^*,2)/2+1 + \delta}
\end{equation}
for any $\delta >0$ (and $\mu$ large enough). Therefore, since $\beta^*_{3/2} \le \min(s^*,2)/2$, the error from the geometric expansion dominates the error from the perturbation theory in \Cref{eq:impleq1} and we get
 \begin{equation} \label{eq:impleq2}
\frac{1}{\sqrt{\mu}} m_\mu^{(\kappa)}(T_c) = \frac{-1}{\sqrt{\mu} e_\mu - \sqrt{\mu} \big\langle u \big\vert \mathfrak{F}_\mu V M_{T_c,\mu}^{(\kappa)} V \mathfrak{F}_\mu^\dagger \big\vert u \big\rangle + O(\mu^{-3\beta^*_{3/2} + 3/2+\delta})}\,,
\end{equation}
for any $\delta > 0$, instead. Moreover, using \Cref{eq:specialcase} and \Cref{eq:Tconv}, we get
\begin{equation} \label{eq:Wmuboundspecial}
\left\Vert \mathcal{W}_\mu^{(\kappa)}\big\vert_{\mathcal{I}_\mathcal{L}} \right\Vert_{\mathrm{op}} \le C_{\mathcal{L}, \delta} \mu^{-\beta^*_{3/2}-\min(s^*,2)/2+1/2 + \delta}
\end{equation}
for any $\delta >0$. By combining Equations \eqref{eq:impleq2}, \eqref{eq:Tconv} and \eqref{eq:Wmuboundspecial} with $\vert \sqrt{\mu} e_\mu \vert \ge c_\delta \,  \mu^{-\min(s^*+\delta,2)+1/2}$, we find
 \begin{equation*}
\frac{1}{\sqrt{\mu}} m_\mu^{(\kappa)}(T_c) + \frac{\pi}{2 \sqrt{\mu} b_\mu^{(\kappa)}} = O(\mu^{-3\beta^*_{3/2} +  \min(s^*,2)+1/2+\delta})
\end{equation*}
for any $\delta >0$, and the proof comes to an end in the same way as above by realizing that $3\beta^*_{3/2}-\min(s^*,2)-1/2>0$. 
\end{proof}
\begin{proof}[Proof of Lemma \ref{lem:6}]
The proof follows a similar perturbation theoretic argument as in the proof of Lemma \ref{lem:5}. We first assume condition (d) from Definition \ref{def:admpot} and discuss the changes for the special case afterwards.  To begin with, we derive a bound on $\Vert \mathcal{W}_\mu\Vert_{\mathrm{op}}$. For this purpose, we take a normalized $u \in L^2(\Sph^2)$ and estimate
\begin{align*}
\big\vert \big\langle u \big\vert\mathcal{W}_\mu  \big\vert u \big\rangle \big\vert \ \le \ &\big\vert \big\langle u \big\vert \mathcal{W}_\mu \big\vert u \big\rangle - \big\langle u \big \vert\mathcal{W}_\mu^{(\kappa)}  \big\vert u \big\rangle \big\vert \\ &+ \big\vert \big\langle u \big\vert \mathfrak{F}_\mu V M_{T_c,\mu}^{(\kappa)} V \mathfrak{F}_\mu^\dagger  \big\vert u \big\rangle - \big\langle u \big \vert \mathcal{W}_\mu^{(\kappa)}  \big\vert u \big\rangle\big\vert + \big\vert \big\langle u \big\vert \mathfrak{F}_\mu V M_{T_c,\mu}^{(\kappa)} V \mathfrak{F}_\mu^\dagger \big\vert u \big\rangle\big\vert\,.
\end{align*}
The second term is smaller than any inverse power of $\mu$ by \Cref{eq:Tconv}. Using \Cref{eq:case12} and \Cref{eq:case11}, the third term is bounded by $\mu^{-2\beta^* + 1/2 + \delta}$ for any $\delta >0$, uniformly in $u \in L^2(\Sph^2)$. Since
\begin{equation} \label{eq:differencekappa0}
\big\langle u \big\vert \mathcal{W}_\mu \big\vert u \big\rangle - \big\langle u \big \vert\mathcal{W}_\mu^{(\kappa)}  \big\vert u \big\rangle  
= \sqrt{\mu} \int_{0}^{\infty} \hspace{-2mm}\D \vert p \vert \left(1- \frac{\vert p\vert^2 }{\vert p \vert^2 + \kappa^2}\right) \left\Vert \mathcal{V}_\mu u \right\Vert_{L^2}^2 = \kappa \, \frac{\pi}{2} \sqrt{\mu} \left\Vert \mathcal{V}_\mu u \right\Vert_{L^2}^2\,,
\end{equation}
we infer by means of \Cref{eq:case12} that also the first term is bounded by $\mu^{-2\beta^* + 1/2 + \delta}$, uniformly in $u \in L^2(\Sph^2)$, and we thus have 
\begin{equation} \label{eq:Wmubound2}
\left\Vert \mathcal{W}_\mu\right\Vert_{\mathrm{op}} \le C_\delta \mu^{-2\beta^*+1/2 + \delta}
\end{equation}
for any $\delta > 0$. In particular, since $\vert \sqrt{\mu} e_\mu \vert \ge c_\delta \,  \mu^{-\min(s^*+\delta,2)+1/2}$ for any $\delta > 0$, this shows that, whenever $e_\mu <0$ also $b_\mu<0$ for large enough $\mu$. Moreover, using $\vert \sqrt{\mu} e_\mu \vert \ge c_\delta \,  \mu^{-\min(s^*+\delta,2)+1/2}$ together with \Cref{eq:Wmubound} and \Cref{eq:Wmubound2}, first order perturbation theory implies
\begin{align}
&\frac{\pi}{2\sqrt{\mu} b_\mu} - \frac{\pi}{2\sqrt{\mu} b_\mu^{(\kappa)}}  = \frac{\pi}{2}\frac{b_\mu^{(\kappa)} - b_\mu}{\sqrt{\mu}\,  b_\mu^{(\kappa)} b_\mu } \nonumber \\
= &\, \frac{ \big(e_\mu -  \big\langle u \big\vert \mathcal{W}_\mu^{(\kappa)} \big\vert u \big\rangle\big) - \big(e_\mu - \big\langle u' \big \vert\mathcal{W}_\mu \big\vert u' \big\rangle\big)  + O(\mu^{-4\beta^*+\min(s^*,2)/2 + 3/2+\delta})}{\sqrt{\mu} e_\mu^2 + O(\mu^{-2\beta^* + 1+\delta})} \label{eq:comparison}\\ \nonumber
= &\, \kappa \, \frac{\pi}{2} + O(\mu^{-4\beta^* + 3 \min(s^*,2)/2+1/2+\delta})\,. 
\end{align}
As in the proof of Lemma \ref{lem:5}, $u$ resp.~$u'$ is a (the) normalized eigenfunction corresponding to the lowest eigenvalue $e_\mu$ of $\mathcal{V}_\mu$. In case of (finite!) degeneracy, one has to choose the ground state $u$ resp.~$u'$ of $\mathcal{V}_\mu$ that minimizes the second term in each bracket $(\cdots)$ in \Cref{eq:comparison}. \textit{A priori}, $u$ and $u'$ could be different. But, by application of \Cref{eq:differencekappa0} we get that $\mathcal{W}_\mu$ and $\mathcal{W}_\mu^{(\kappa)}$ differ only by the constant $(\kappa \pi \sqrt{\mu}e_\mu^2)/2$ on the ground state space of $\mathcal{V}_\mu$. Therefore, $u = u'$ and the last equality in \Cref{eq:comparison} follows by \Cref{eq:differencekappa0} in combination with $\vert \sqrt{\mu} e_\mu \vert \ge c_\delta \,  \mu^{-\min(s^*+\delta,2)+1/2}$. Since $4\beta^* - 3 \min(s^*,2)/2-1/2 > 0$ (see \Cref{eq:perttheoryestimate}), \Cref{eq:comparison} implies \Cref{eq:kappa0}.

In case that there exists $\mu_0 >0$ and $\mathcal{L} \subset \mathbb{N}_0$ with $\vert \mathcal{L} \vert < \infty$, such that for all $\mu \ge \mu_0$, the ground state space of $\mathcal{V}_\mu$ is contained in the finite--dimensional subspace 
\begin{equation*}
\mathcal{I}_\mathcal{L} := \mathrm{span} \left\{ Y_\ell^m : \ell \in \mathcal{L}, \, |m| \le \ell  \right\} 
\end{equation*} 
of $L^2(\Sph^2)$, spanned by the spherical harmonics with angular momentum $\ell \in \mathcal{L}$, we can drop condition (d) from Definition \ref{def:admpot}. In order to see this, we use \Cref{eq:eigenvalues} and estimate 
\begin{equation*}
\left\Vert \mathcal{V}_\mu\big\vert_{\mathcal{I}_\mathcal{L}}  \right\Vert_{\mathrm{op}} = \sup_{\ell \in \mathcal{L}} \left\vert \frac{1}{2\pi^2} \int_{\R^3} V(x) \left(j_\ell(\sqrt{\mu}\vert x \vert)\right)^2 \D x \right\vert \le C_{\mathcal{L}, \delta} \, \mu^{-\min(s^*,2)+ \delta}
\end{equation*}
for any $\delta >0$ (and $\mu$ large enough) by means of Proposition~\ref{prop:2}~(iii). Combining this with \Cref{eq:specialcase} and using $\beta^*_{3/2} \le \min(s^*,2)/2$, we get by the same argument as above that
\begin{equation} \label{eq:Wmubound2special}
\left\Vert \mathcal{W}_\mu\big\vert_{\mathcal{I}_\mathcal{L}} \right\Vert_{\mathrm{op}} \le C_{\mathcal{L}, \delta} \mu^{-\beta^*_{3/2}-\min(s^*,2)/2+1/2 + \delta}
\end{equation}
for any $\delta >0$. Using first order perturbation theory, \Cref{eq:Wmubound2special} and \Cref{eq:differencekappa0} together with $\vert \sqrt{\mu} e_\mu \vert \ge c_\delta \,  \mu^{-\min(s^*+\delta,2)+1/2}$ imply
\begin{equation*}
\frac{\pi}{2\sqrt{\mu} b_\mu} - \frac{\pi}{2\sqrt{\mu} b_\mu^{(\kappa)}} = \kappa \, \frac{\pi}{2} + O(\mu^{-3\beta^*_{3/2} + \min(s^*,2)+1/2+\delta})
\end{equation*}
for any $\delta >0$. Since $3\beta^*_{3/2} - \min(s^*,2)-1/2>0$ we conclude the desired. 
\end{proof}
\noindent Finally, we give the proofs of Lemma \ref{lem:3} and Lemma \ref{lem:4}. 
\begin{proof} [Proof of Lemma \ref{lem:3}]
	For $s^* \in (1,5/3]$ the statement easily follows from the uniform decay of spherical Bessel functions (see Proposition~\ref{prop:2}~(iii)). For $s^* > 5/3$ choose
	\begin{equation}
	\alpha = \max\left( \frac{5q-7}{3qs^* - 7}, \, \frac{5q-7}{6q-7}\right)\in (0,1) \label{eq:alphabounds}
	\end{equation}
	and for (small) $\delta > 0$ set $s:= \min(s^*,2) - \delta/\alpha$. Recall that $q=p/(p-1)$ denotes the dual of $p \in [3/2,9/4]$. We now employ Hölder's inequality to obtain
	\begin{align*}
	&\sup_{\ell \in \mathbb{N}_0}\int_{\R^3} \D x \vert V (x)\vert \vert j_\ell(\sqrt{\mu}\vert x \vert )\vert^{2-\varepsilon} \\ \le & \, C \left\Vert \frac{V}{\vert \cdot \vert^s}\right\Vert_{L^1}^\alpha \Vert V \Vert_{L^{p}}^{1-\alpha} \sup_{\ell \in \mathbb{N}_0} \left(\int_{0}^{\infty} \D x \, x^{\frac{q\alpha s}{1-\alpha} + 2} \, \vert j_\ell (\sqrt{\mu}x) \vert^{\frac{q}{1-\alpha}(2-\epsi)}\right)^{\frac{1-\alpha}{q}}\\
	\le \, & C \mu^{- \frac{\alpha s +3 (1-\alpha)/q }{2}}\left\Vert \frac{V}{\vert \cdot \vert^s}\right\Vert_{L^1}^\alpha \Vert V \Vert_{L^{p}}^{1-\alpha} \sup_{\ell \in \mathbb{N}_0} \left(\int_{0}^{\infty} \D x \, x^{\frac{q\alpha s}{1-\alpha} + 2} \, \vert j_\ell (x) \vert^{\frac{q}{1-\alpha}(2-\epsi)}\right)^{\frac{1-\alpha}{q}}\,.
	\end{align*}
	For $\epsi(\delta)>0$ small enough, the integral is finite by the uniform $L^p$--integrability of spherical Bessel functions (see Proposition~\ref{prop:2}~(iv)) since $\alpha <(5q-7)/(3qs-7)$ and thus the claim follows since $\frac{\alpha s +3 (1-\alpha)/q }{2} = \beta^*_p - \delta/2 $ (cf.~\Cref{eq:betabound} for the definition of $\beta_p^*$, and \Cref{eq:alphabounds}). 
\end{proof}
\begin{proof}[Proof of Lemma \ref{lem:4}]
	To begin with the proof, we have two important observations. 
	
	First, recall the definition of $s^*_{\pm}$ from \Cref{eq:defs}. We aim to prove that $r^*_{\pm} = s^*_{\pm}$, where
	\begin{equation*} 
	r^*_{\pm} := \sup \left\{ r \ge 0 : \lim\limits_{\varepsilon \to 0} \frac{1}{\varepsilon^r} \int_{B_\epsi} V_{\pm}(x) \D x = 0 \right\}\,.
	\end{equation*}
For this purpose, we define
	\begin{equation*}
s_{\pm}^*(a) := \sup \left\{ s \ge 0 : \vert \cdot \vert^{-s} V_{\pm}\vert_{B_a}^* \in L^1(\R^3)\right\} 
\end{equation*}
and
	\begin{equation*} 
	r^*_{\pm}(a) :=  \sup \left\{ r \ge 0 : \lim\limits_{\varepsilon \to 0} \frac{1}{\varepsilon^r} \int_{B_\epsi}  V_{\pm}\vert_{B_a}^*(x) \, \D x = 0 \right\} 
	\end{equation*}
	for the same $a>0$, for which we assumed that $r^*_\pm = r^*_\pm(a)$ in Definition \ref{def:admpot}. 

 Note that $r^*_\pm \ge s^*_\pm$ by definition. Using that $\vert \cdot \vert^{-s}$ is equal to its symmetric decreasing rearrangement, we can employ the basic rearrangement inequality \cite[Thm.~3.4]{liebloss} to obtain $s^*_\pm \ge s^*_\pm(a)$. Therefore, since $r^*_\pm = r^*_\pm(a)$ by assumption, we have
	\begin{equation*}
	r^*_\pm(a) = r^*_\pm \ge s^*_\pm \ge s^*_\pm(a) \,.
	\end{equation*}
	In order to see $r^*_\pm = s^*_\pm$ it is sufficient to prove that $s^*_\pm(a) \ge r_\pm^*(a) $. Assume the contrary, i.e.~$s_\pm^*(a) < r_\pm^*(a) $, and
let $r, \, r+ \delta \in (s_\pm^*(a),r_\pm^*(a))$ for some $\delta >0$. We denote $V_{\pm,a}^* \equiv V_\pm\vert_{B_a}^* $ for short. By definition of $s_\pm^*$ and $r_\pm^*$, we thus have
	\begin{equation} \label{eq:rstrans}
	\int_{B_\epsi} \frac{V_{\pm,a}^*(x)}{\vert x \vert^r} \D x \ge c \qquad \text{and} \qquad \int_{B_\epsi}  V_{\pm,a}^*(x) \D x = o(\epsi^{r+\delta})\,.
	\end{equation}
The first integral actually equals infinity, but we only need that it is uniformly bounded from below by some $c>0$. 	Since $\vert V_{\pm,a} \vert^*$ is symmetric--decreasing and thus one--sided limits exist, the auxiliary quantity
\begin{equation*}
t_\pm^*(a) := \inf \left\{ t \ge 0 : \lim\limits_{\vert x \vert \to 0} \vert x \vert^t \,   V_{\pm,a}^*(x) = 0  \right\}
\end{equation*}
is well defined. By definition of $t_\pm^*(a)$ we thus get
	\begin{equation*}
\frac{c_\nu}{\vert x \vert^{t_\pm^*(a)-\nu}} \le V_{\pm,a}^*(x) \le \frac{C_\nu}{\vert x \vert^{t_\pm^*(a)+\nu}} 
	\end{equation*}
	for any $\nu >0$ and $\vert x \vert$ small enough. Inserting this in \Cref{eq:rstrans} we arrive at
	\begin{equation*}
	  \epsi^{3-t_\pm^*(a) -r-\nu}   \ge c_\nu \qquad \text{and} \qquad \epsi^{3-t_\pm^*(a) -r-\delta + \nu} \le C_\nu
	\end{equation*}
	which yields a contradiction by choosing $\nu \in (0,\delta/2)$. Therefore, $r_\pm^*(a) = s_\pm^*(a)$, which proves that $r_\pm^* = s_\pm^*$. 
	
	Second, note that for any $f \in L^1(\R^3)$ we have 
	\begin{equation*}
	\int_{\R^3} f(x) (\sin(n\vert x \vert ))^2 \D x = \frac{1}{2}\int_{\R^3} f(x) (1-\cos(2n\vert x \vert )) \D x  \longrightarrow \frac{1}{2} \int_{\R^3} f(x) \D x 
	\end{equation*}
	as $n \to \infty$ by the Riemann--Lebesgue Lemma. 
	
	In order to prove Lemma \ref{lem:4}, we study the asymptotic behavior of the integral 
	\begin{equation*}
v_\mu:= \int_{\R^3} V(x) \left(\frac{\sin(\sqrt{\mu}\vert x \vert)}{\sqrt{\mu}\vert x \vert}\right)^2 \D x 
	\end{equation*}
	in three different cases. 
	
	\textit{Case 1.} If $\vert \cdot \vert^{-2}V\in L^1(\R^3)$, we get by our second observation that 
	\begin{equation*}
v_\mu = \mu^{-1} \left( \frac{1}{2}\int_{\R^3} \frac{V(x)}{\vert x \vert^2} \, \D x + o(1)\right) \le - c \mu^{-1}\,,
	\end{equation*}
	which immediately proves the claim. 
	
		\textit{Case 2.} If $\vert \cdot \vert^{-2}V \notin L^1(\R^3)$ and $s^* < 2$ we take some $r \in (0,1/2)$ and estimate
	\begin{align*}
	\int_{\R^3} V(x) &\left(\frac{\sin(\sqrt{\mu}\vert x \vert)}{\sqrt{\mu}\vert x \vert}\right)^2 \D x  \\
	&= \int_{B_{r}} V_+(x) \left(\frac{\sin(\sqrt{\mu}\vert x \vert)}{\sqrt{\mu}\vert x \vert}\right)^2 \D x - \int_{B_{r}} V_-(x) \left(\frac{\sin(\sqrt{\mu}\vert x \vert)}{\sqrt{\mu}\vert x \vert}\right)^2 \D x + O(\mu^{-1})\,.
	\end{align*}
	The first term can be bounded by $\mu^{-s^*_+/2+\delta}$ for any $\delta > 0$. The second term can be estimated from below as
	\begin{align*}
\int_{B_r} V_-(x)  \left(\frac{\sin(\sqrt{\mu}\vert x \vert)}{\sqrt{\mu}\vert x \vert}\right)^2 \D x  \ge 	\int_{B_{\frac{r}{\sqrt{\mu}}}} V_-(x)  &\left(\frac{\sin(\sqrt{\mu}\vert x \vert)}{\sqrt{\mu}\vert x \vert}\right)^2 \D x \\
&\ge c_r\int_{B_{\frac{r}{\sqrt{\mu}}}}  V_- (x) \D x \ge c_{r,\delta} \mu^{-s^*_-/2-\delta}
	\end{align*}
	for any $\delta > 0$. Since $s^* = s^*_- < s^*_+$, we get that $v_\mu \le - c_\delta \mu^{-\min(s^*+\delta, 2)/2}$ for any $\delta > 0$.
	
	 	\textit{Case 3.}	If $\vert \cdot \vert^{-2}V \notin L^1(\R^3)$ and $s^* = 2$ we have that $\vert \cdot \vert^{-2}V_+ \in L^1(\R^3)$ but $\vert \cdot \vert^{-2}V_- \notin L^1(\R^3)$ since ${s^* = s^*_- < s^*_+}$. On the one hand, this implies that 
	 	\begin{equation*}
	\int_{\R^3} V_+(x) \left(\frac{\sin(\sqrt{\mu}\vert x \vert)}{\sqrt{\mu}\vert x \vert}\right)^2 \D x \le K \mu^{-1}
	 	\end{equation*}
	 	for some $K>0$ by means of our second observation. On the other hand, let $r >0$ and estimate 
	 	\begin{align*}
	\mu \int_{\R^3} V_-(x) \left(\frac{\sin(\sqrt{\mu}\vert x \vert)}{\sqrt{\mu}\vert x \vert}\right)^2 \D x \ge \mu \int_{B_{r}^c} \hspace{-1mm}V_-(x) \left(\frac{\sin(\sqrt{\mu}\vert x \vert)}{\sqrt{\mu}\vert x \vert}\right)^2 \D x  
	\stackrel{\mu \to \infty}{\longrightarrow} \frac{1}{2}\int_{B_{r}^c} \frac{V_-(x)}{\vert x \vert^2 }  \D x\,.
	 	\end{align*}
	 	By taking $r \to 0$ the right hand side can be made arbitrarily large, in particular greater than $K$. This implies that $v_\mu \le - C \mu^{-1}$ for any $C>0$. 
\end{proof}

 \subsection{Properties of spherical Bessel functions}
 \begin{prop}{\rm (Properties of spherical Bessel functions \cite{abramowitz, landau,Besselint})} \label{prop:2} \\
The spherical Bessel functions $(j_\ell)_{\ell \in \mathbb{N}_0}$ satisfy the following properties: 
\begin{itemize}
	\item[(i)] uniform boundedness, i.e.~$\sup_{\ell \in \mathbb{N}_0} \sup_{x\ge 0}\vert j_\ell(x)\vert  \le 1$,
		\item[(ii)] uniform Lipschitz continuity, i.e.~$\sup_{\ell \in \mathbb{N}_0} \sup_{x\ge 0}\vert j'_\ell(x)\vert  \le 1$,
	\item[(iii)] (uniform) decay, i.e.~for every $\ell \in \mathbb{N}_0$, we have $\sup_{x\ge 0}\vert x\,  j_\ell(x)\vert\le C_\ell$ for some $C_\ell >0$, and $\sup_{\ell \in \mathbb{N}_0} \sup_{x\ge 0}\vert x^{5/6} j_\ell(x)\vert\le C$ for some universal $C>0$,
	\item[(iv)] uniform $L^p$--integrability, i.e.~for $p \in (0,\infty)$ and $a \in (-1,p-1)$ if $p \in (0,4]$ or $a \in (-1,5p/6-1/3)$ if $p \in (4,\infty)$, we have
	\begin{equation*}
	\sup_{\ell \in \mathbb{N}_0} \int_{0}^\infty \vert j_\ell(x) \vert^p x^a \D x \le C_{p,a}
	\end{equation*}
	for some universal constant $C_{p,a}>0$. 
\end{itemize}
 \end{prop}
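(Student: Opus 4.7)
My plan is to deduce all four properties from the identity $j_\ell(x) = \sqrt{\pi/(2x)}\, J_{\ell+1/2}(x)$ together with two classical inputs: Poisson's integral representation
\begin{equation*}
j_\ell(x) = \frac{(-\I)^\ell}{2} \int_{-1}^{1} \E^{\I x t} P_\ell(t)\, \D t,
\end{equation*}
and Landau's uniform bound $|J_\nu(x)| \le c\, x^{-1/3}$ valid for $\nu \ge 0$ and $x > 0$. For (i), the Poisson representation combined with $|P_\ell(t)| \le 1$ on $[-1,1]$ gives $|j_\ell(x)| \le 1$ immediately; for (ii), differentiating under the integral picks up an extra factor $|t| \le 1$ and yields $|j'_\ell(x)| \le 1$ in exactly the same way.

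For (iii), the $\ell$-dependent half $|x\, j_\ell(x)| \le C_\ell$ is a direct consequence of the standard asymptotic $j_\ell(x) = \sin(x - \ell\pi/2)/x + O(x^{-2})$ at infinity combined with $|j_\ell| \le 1$ near the origin. The uniform half is the first serious step: Landau's estimate multiplied by the conversion factor $\sqrt{\pi/(2x)}$ gives $|j_\ell(x)| \le C\, x^{-5/6}$ uniformly in $\ell$, which together with $|j_\ell| \le 1$ produces $|x^{5/6} j_\ell(x)| \le C$. For (iv), I would partition $[0,\infty)$ into $[0,1]$ and, for each $\ell$, the three natural sub-regions of $[1,\infty)$: the exponentially suppressed region $x < \ell - \ell^{1/3}$, the Airy transition zone $x \in [\ell - \ell^{1/3}, \ell + \ell^{1/3}]$, and the deep oscillatory region $x > \ell + \ell^{1/3}$. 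On $[0,1]$ the trivial bound $|j_\ell| \le 1$ gives integrability iff $a > -1$. In the oscillatory region one has $|j_\ell(x)| \le C/x$ uniformly in $\ell$ (from stationary phase with the error controlled by the distance to the turning point), so the radial integral behaves like $\int^\infty x^{a-p}\,\D x$ and converges iff $a < p-1$. In the Airy zone, $|j_\ell(x)| \le C\, x^{-5/6}$ over an interval of length $\sim x^{1/3}$ contributes a quantity of order $x^{a - 5p/6 + 1/3}$ that stays bounded in $\ell$ iff $a < 5p/6 - 1/3$. The stated ranges are then the intersection of the two thresholds, and they coincide exactly at $p = 4$, which explains the case split.

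The main obstacle is producing bounds that are genuinely uniform in $\ell$: both the decay estimate in (iii) and the matching of the two thresholds in (iv) rely essentially on Landau's uniform bound for $J_\nu$ and on the characteristic width $\sim \ell^{1/3}$ of the Airy transition zone near the turning point $x \sim \ell$. Once these two ingredients (and the standard off-turning-point asymptotic) are in hand, the remaining work is routine splitting of radial integrals and case analysis on $p$.
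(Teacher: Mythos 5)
Your proof is correct and takes a more self-contained route than the paper's, which is essentially a pointer to the literature. For (i) the paper simply calls the bound elementary; you derive it from Poisson's integral representation together with $|P_\ell|\le 1$. For (ii) the paper uses (i) and the three-term recurrence $j'_\ell = \frac{1}{2\ell+1}(\ell j_{\ell-1}-(\ell+1)j_{\ell+1})$, whereas you differentiate the Poisson integral under the integral sign and absorb the extra factor $|t|\le 1$; both are one-line arguments, but yours avoids invoking a second identity. For (iii) the two proofs coincide: both convert $j_\ell(x) = \sqrt{\pi/(2x)}\,J_{\ell+1/2}(x)$ and invoke Landau's uniform bound $|J_\nu(x)|\le c\,x^{-1/3}$ to get the exponent $5/6$. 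For (iv) the paper cites Stempak's uniform weighted $L^p$ estimate for $J_\nu$ outright, while you reconstruct it from scratch via the standard trichotomy near the turning point $x\sim\ell$ (exponentially suppressed region, Airy zone of width $\sim\ell^{1/3}$, oscillatory region). Your sketch correctly identifies where the two constraints $a<p-1$ and $a<5p/6-1/3$ come from and why they match at $p=4$.

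One place where your sketch is imprecise is the oscillatory region. The uniform envelope bound there is not $|j_\ell(x)|\le C/x$ but rather $|j_\ell(x)|\lesssim x^{-1/2}(x^2-\ell^2)^{-1/4}$, which only degenerates to $C/x$ for $x\gg\ell$. In the intermediate range $\ell+\ell^{1/3}\lesssim x\lesssim 2\ell$ you must use the full envelope; it is precisely the integral over this range (not the far region $x\gg\ell$) that produces the constraint $a<p-1$ and matches onto the Airy-zone constraint at $p=4$ (with a logarithm appearing exactly at $p=4$). Your parenthetical about the error being controlled by the distance to the turning point suggests you are aware of this, but as written the estimate is too crude. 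With the envelope bound substituted in, the splitting argument closes and reproduces the stated ranges; this is exactly the computation packaged in Stempak's cited result.
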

\begin{proof}
	The first statement (i) is an elementary property of the spherical Bessel functions. The second statement (ii) follows from the uniform boundedness in (i) and the recursion relation \cite[Eq.~10.1.20]{abramowitz}
	\begin{equation*}
	j'_\ell = \frac{1}{2\ell + 1}\left(\ell j_{\ell -1} - (\ell + 1)j_{\ell+1}\right) \,.
	\end{equation*}
	By noticing that $j_\ell(x) = \sqrt{\pi/(2x)} J_{\ell + 1/2}(x)$, the third (iii)  and the fourth statement (iv) are easy consequence of \cite[Eq.~9.2.1]{abramowitz}, \cite[Eq.~1]{landau}, and \cite[Eq.~3]{Besselint}, respectively, where analogous estimates for the standard Bessel functions $J_\nu$ with $\nu \ge 0$ are proven. 
\end{proof}

\subsection{Proof of Equation \eqref{eq:gontierbound}}
\begin{proof}[Proof of \Cref{eq:gontierbound}]
We note that $K_{T,\mu}(p) + V(x) \ge 0$ is equivalent to $K_{T/\mu,1}(p) + \frac{1}{\mu} V(x/\sqrt{\mu}) \ge 0$ and estimate
\begin{align*}
K_{T/\mu,1}(p) \, +\,  \frac{1}{\mu} V(x/\sqrt{\mu}) 
&\ge \frac{1}{2}\left( \vert p^2 - 1 \vert + \frac{2T}{\mu}\right) - \frac{1}{\mu} V_-(x/\sqrt{\mu}) \\
&\ge \frac{1}{2}\left( \vert p^2 - 1 \vert + \frac{2T}{\mu} - \frac{2}{\mu} V_-(x/\sqrt{\mu}) \left(\E^{-m\vert x \vert} + m\vert x \vert \right)\right) \\
&\ge \frac{1}{2}\left( \vert p^2 - 1 \vert + \frac{2T}{\mu} - \frac{2}{\sqrt{\mu}} \Vert  \vert \cdot \vert V \Vert_{L^\infty} \left(\frac{\E^{-m\vert x \vert}}{\vert x \vert } + m \right)\right) 
\end{align*}
for any $m>0$. By definition of $T_c$, we have the bound
\begin{equation*}
T_c \le - \frac{\mu}{2} \, \mathrm{inf\,spec}\left(\vert p^2 - 1 \vert  - \frac{2}{\sqrt{\mu}} \Vert  \vert \cdot \vert V \Vert_{L^\infty} \left(\frac{\E^{-m\vert x \vert}}{\vert x \vert } + m \right)\right)\,.
\end{equation*} 
After taking $m = (\mathrm{const.}) \mu^{1/4} \E^{- \sqrt{\pi/(2 \Vert  \vert \cdot \vert V \Vert_{L^\infty})} \, \mu^{1/4}}$ and using the estimate above Equation (15) in \cite{gontier}, we get
\begin{equation*}
T_c \lesssim \mu \exp\left(- \sqrt{\frac{\pi}{2 \Vert  \vert \cdot \vert V\Vert_{L^\infty}}} \mu^{1/4}\right) \,. \qedhere
\end{equation*}
\end{proof}

\section*{Declarations}
\textbf{Acknowledgments.} 
I am very grateful to Robert Seiringer for his guidance during this project and for many valuable comments on an earlier version of the manuscript. Moreover, I would like to thank Asbj{\o}rn B\ae kgaard Lauritsen for many helpful discussions and comments, pointing out the reference \cite{Besselint} and for his involvement in a closely related joint project \cite{HLenergygap}. Finally, I am grateful to Christian Hainzl for valuable comments on an earlier version of the manuscript and Andreas Deuchert for interesting discussions. 
\\[3mm]
\textbf{Funding.} Partial financial support by the ERC Advanced Grant ``RMTBeyond” No.~101020331 is gratefully acknowledged. \\[3mm]
\textbf{Conflicts of Interest.} The author has no relevant financial or non-financial interests to disclose.
\\[3mm]
\textbf{Availability of data and materials.} Not applicable.

\end{document}